\newtheorem{remark}{Remark}
\newtheorem{definition}{Definition}
\newtheorem{theorem}{Theorem}
\newtheorem{example}{Example}
\title{Reduced Complexity Sum-Product Algorithm for Decoding Network Codes and In-Network Function Computation}
\begin{document}

\author{
\authorblockN{Anindya Gupta and B. Sundar Rajan\\}
\authorblockA{Department of Electrical Communication Engineering, Indian Institute of Science, Bangalore 560012, India\\ Email: \{anindya.g, bsrajan\}@ece.iisc.ernet.in}
}

\maketitle
%\thispagestyle{plain}	
%%%%%%%%
\begin{abstract}
While the capacity, feasibility and methods to obtain codes for network coding problems are well studied, the decoding procedure and complexity have not garnered much attention. In this work, we pose the decoding problem at a sink node in a network as a marginalize a product function (MPF) problem over a Boolean semiring and use the sum-product (SP) algorithm on a suitably constructed factor graph to perform iterative decoding. We use \textit{traceback} to reduce the number of operations required for SP decoding at sink node with general demands and obtain the number of operations required for decoding using SP algorithm with and without traceback. For sinks demanding all messages, we define \textit{fast decodability} of a network code and identify a sufficient condition for the same. Next, we consider the in-network function computation problem wherein the sink nodes do not demand the source messages, but are only interested in computing a function of the messages. We present an MPF formulation for function computation at the sink nodes in this setting and use the SP algorithm to obtain the value of the demanded function. The proposed method can be used for both linear and nonlinear as well as scalar and vector codes for both decoding of messages in a network coding problem and computing linear and nonlinear functions in an in-network function computation problem.
\end{abstract} 

\begin{IEEEkeywords}
Network Coding, Decoding, Sum-Product Algorithm, Traceback, In-network Function Computation.
\end{IEEEkeywords}
%\let\thefootnote\relax\footnote{There is no number in this footnote.}
%%%%%%%%%%%%%%%%%%%%%%%%%%%%%%%
\section{Introduction}
In contemporary communication networks, the nodes perform only routing, i.e., they copy the data on incoming links to the outgoing links. In order to transmit messages generated simultaneously from multiple sources to multiple sinks the network may need to be used multiple times. This limits the throughput of the network and increases the time delay too. Network coding is known to circumvent these problems \cite{Yeung}. In network coding intermediate nodes in a network are permitted to perform coding operations, i.e., encode data received on the incoming links and then transmit it on the outgoing links (each outgoing link can get differently encoded data), the throughput of the network increases. Thus, network coding subsumes routing. For example, consider the butterfly network \cite{Yeung} of Fig.~\ref{b_fly} wherein each link can carry one bit per link use, source node $S$ generates bits $b_1$ and $b_2$, and both sink nodes $T_1$ and $T_2$ demand both source bits. With routing only, two uses of link $V_3-V_4$ are required while with network coding only one. 
\begin{figure}[h]
\centering
\includegraphics[scale=0.45]{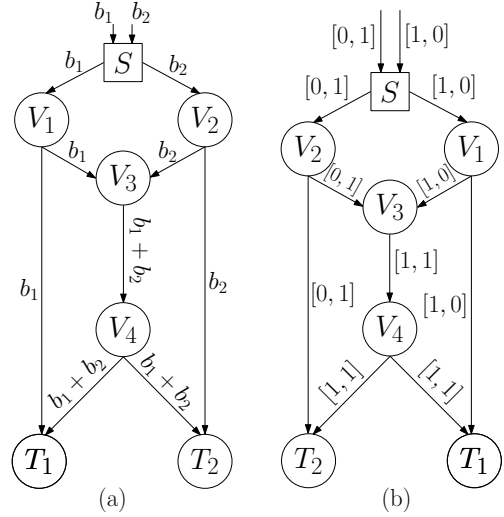}
\caption{The butterfly network: (a) A network code and (b) its global encoding vectors.}
\label{b_fly}
\vspace{-14pt}
\end{figure}

This is an example of single-source multi-sink linear multicast network coding, wherein there is a single source ($S$), generating a finite number of messages, ($x_1,x_2$), and multiple sinks, each demanding all the source messages and the encoding operations at all nodes are linear. In general, there may be several source nodes, each generating a different number of source messages, and several sink nodes, each demanding only a subset, and not necessarily all, of the source messages. Decoding at sink nodes with such general demands is studied in this paper. 

\subsection{Notations and Preliminaries}
We represent a network by a finite directed acyclic graph $\mathcal{N}=(\mathcal{V},\mathcal{E})$, where $\mathcal{V}$ is the set of vertices or nodes and $\mathcal{E} \subseteq \mathcal{V} \times \mathcal{V}$ is the set of directed links or edges between nodes. All links are assumed to be error-free. Let $F$ denote a $q$-ary finite field. The set $\{1,2,\ldots,n\}$ is denoted by $[n]$. The network has $J$ sources, $S_j,\,j\in [J]$, and $K$ sinks, $T_k,\,k\in [K]$. The source $S_j$ generates $\omega _j$ messages for all $j\in [J]$. Let $\omega=\sum _{j=1}^J \omega _j$ be the total number of source messages. The $\omega$-tuple of source messages is denoted by $x_{[\omega]}=(x_1,x_2,\ldots,x_\omega)$, where $x_i\in F$ for all $i\in [\omega]$. By $\mathbf{x}=(x_1,\ldots,x_{\omega})^T$ we denote the column vector of the source messages. The demand of the $k^{th}$ sink node is denoted by $D_k\subseteq [\omega]$. Given a set $I=\{i_1\ldots,i_l\}\subseteq [\omega]$, let $x_I=(x_{i_1},\ldots,x_{i_l})$, i.e., $x_{[\omega]}$ \emph{restricted} to $I$. For disjoint subsets $I$ and $J$ of $[\omega]$, we do not differentiate between $(x_I,x_J)$ and $x_{I\cup J}$. %Let $\{x_I\}=\{x_I\colon x_I\in F^I\}$, i.e., the set of all $I$-tuples over $F$. 
For a multi-variable binary-valued function $f(x_1,\ldots,x_\omega)$, the subset of $F^\omega$ whose elements are mapped to $1$ by $f(x_1,\ldots,x_\omega)$ is called its support and is denoted by  $\mathrm{supt}(f(x_{[\omega]}))$ and $\mathrm{supt}_{I}(f(x_{[\omega]}))$ denotes the $|I|$-tuples in the support restricted to $I$. A source message is denoted by edges without any originating node and terminating at a source node. Data on a link $e\in \mathcal{E}$ is denoted by $y_e$. 

A network code is a set of coding operations to be performed at each node such that the requisite source messages can be faithfully reproduced at the sink nodes. It can be specified using either local or global description \cite{Yeung}. The former specifies the data on a particular outgoing edge as a function of data on the incoming edges while the latter specifies the data on a particular outgoing edge as a function of source messages. Throughout the paper we use global description for our purposes. 
\begin{definition}
[Global Description of a network code \cite{Yeung}] An $\omega$-dimensional network code on an acyclic network over a field $F$ consists of $|E|$ global encoding maps $\tilde{f}_e : F^{\omega}\rightarrow F$ for all $e\in E$, i.e.,  $\tilde{f}_e ( \mathbf{x}) = y_e$.
\end{definition}
Let $e_i,i=1,\ldots,\omega$, be the incoming edges at the source, then $y_{e_i}=x_i$. 

When the intermediate nodes perform only linear encoding operations, the resulting network code is said to be a linear network code (LNC).
\begin{definition}
[Global Description of an LNC \cite{Yeung}] An $\omega$-dimensional LNC on an acyclic network over a field $F$ consists of $|E|$ number of $1\times\omega$ global encoding vectors $\mathbf{f}_e$ for all $e\in E$ such that $\mathbf{f}_e\cdot \mathbf{x}=y_e$.
\end{definition}
The global encoding vectors for the incoming edges at the source are standard basis vectors for the vector space $F^{\omega}$. The global encoding vectors of the LNC for butterfly network is given in Fig.~\ref{b_fly}(b). %The set of global encoding vectors can be communicated from parent to child node using $\omega$ overhead symbols over $F$. 

Hereafter we assume that the network is feasible, i.e., demands of all sink nodes can be met using network coding, and the global description of a network code (linear or nonlinear) is available at the sink nodes. If a sink node demands $\omega'$ ($\leqslant$ $\omega$) source messages, it will have at least $\omega'$ incoming edges. The decoding problem is to reproduce the desired source messages from the coded data received at the incoming edges. Thus, decoding amounts to solving a set of at least $\omega'$ simultaneous equations (linear or nonlinear) in $\omega$ unknowns for a specified set of $\omega'$ unknowns. Hence, the global description of a network code is more useful for decoding. 

While decoding of nonlinear network codes has not been studied, the common technique used for decoding an LNC for multicast networks is to perform Gaussian elimination \cite{Lun1,Lun2}, which requires $\mathcal{O}(\omega^3)$ operations, followed by backward substitution, which requires $\mathcal{O}(\omega^2)$ operations \cite{Num}. This is not recommendable when the number of equations (incoming coded messages) and/or variables (source messages) is very large. In such cases, iterative methods are used. Convergence and initial guess are some issues that arise while using iterative methods \cite{Strang}. 

We propose to use the sum-product (SP) algorithm to perform iterative decoding at the sinks. A similar scheme for decoding multicast network codes using factor graphs \cite{Frey} was studied in \cite{Salmond} in which the authors considered the case of LNCs. The problems associated with the proposed decoding scheme in \cite{Salmond} are:
\begin{itemize}
\item To construct the factor graph, full knowledge of network topology is assumed at the sinks which is impractical if the network topology changes. For a particular sink node (say $T$), the factor graph constructed will have $\omega +|E|$ variable nodes and $|E|+ |In(T)|$ factor nodes, where $In(T)$ is the set of incoming edges at node $T$.
\item Complete knowledge of local encoding matrix \cite{Yeung} of each node is assumed at the sinks which again is impractical since local encoding matrix for different nodes will have different dimensions and hence variable number of overhead bits will be required to communicate to downstream nodes which will incur huge overhead.
\end{itemize}
We also point out that the motivating examples, \textit{viz.}, Examples 1 and 4, given in \cite{Salmond} for which the proposed decoding method claims to exploit the network topology admits a simple routing solution and no network coding is required to achieve maximum throughput. Solving a system of linear equations in Boolean variables is also studied in \cite[Ch.~18]{Mez}.

\subsection{Contributions and Organization}
\begin{table*}[t] 
\caption{Number of semiring operations}
\label{tab_nops}
\centering
\small
\renewcommand{\tabcolsep}{10pt}
\renewcommand{\arraystretch}{1.8}
\begin{tabular}{|>{\centering\arraybackslash}m{1in}|>{\centering\arraybackslash}m{2.2in}|>{\centering\arraybackslash}m{2.9in}|}
\hline
 & \textbf{Sum-Product Algorithm} & \textbf{arg-Sum-Product Algorithm} \\
\hline 
\textbf{Single-vertex} & $\sum\limits_{z\in Z}d_zq_z - \sum\limits_{e\in E}q_e - \sum\limits_{v\in V}q_v$ & $\sum\limits_{z\in Z}d_zq_z - \sum\limits_{e\in E}q_e - \sum\limits_{v\in V}q_v +\, q_r\, -1$ \\[6pt]
\hline
%\textbf{All-vertex} & $\sum\limits_{z\in Z}(4d_z-5)q_z + 2\sum\limits_{w\in W}q_w - 2\sum\limits_{e\in E}q_e$ & $\sum\limits_{z\in Z}4(d_z-1)q_z + 2\sum\limits_{w\in W}q_w - 2\sum\limits_{e\in E}q_e -|Z|$ \\
%\hline
\textbf{All-vertex} & $\sum\limits_{z\in Z}(4d_z-5)q_z \,+\, 2\sum\limits_{w\in W}q_w \,-\, 2\sum\limits_{e\in E}q_e$ & $\sum\limits_{z\in Z}(4d_z-5)q_z \,+\, 2\sum\limits_{w\in W}q_w \,-\, 2\sum\limits_{e\in E}q_e\,+\,\sum\limits_{z\in Z}q_z\,-\,|Z|$ \\[6pt]
\hline
%\textbf{Single-vertex with Traceback} & Not Applicable & $\sum\limits_{z\in Z}d_zq_z - \sum\limits_{e\in E}q_e - \sum\limits_{v\in V\backslash r}q_v +\, a_rq_r\, + \sum\limits_{z\in Z\backslash r}(q_z-1)$ \\[6pt]
%\hline
\textbf{Single-vertex with Traceback} & Not Applicable & $\sum\limits_{z\in Z}d_zq_z - \sum\limits_{e\in E}q_e + \sum\limits_{w\in W}q_w \,-\, |Z|$ \\[6pt]
\hline
\end{tabular}
\begin{tabular}{p{6.8in}}
Note: $V$ is the set of variable nodes, $W$ is the set of factor nodes, $Z=V\cup W$, $E$ is the set of edges, $G=(Z,E)$ is the factor graph, $r$ is the chosen root node in $G$, and $a_r=1$ if $r\in W$ and $0$ otherwise. (See Section~IV for a complete discussion.)
\end{tabular}
\vspace{-6pt}
\end{table*}
The contributions and organization of this paper are as follows:
\begin{itemize}
\item In Section~III-A, we pose the problem of decoding of linear and nonlinear network codes as a \emph{marginalize a product function problem} (MPF) and construct a factor graph using the global description of network codes. For a particular sink node, the constructed graph will have fewer vertices than in \cite{Salmond} and hence the number of messages and operations performed will also be fewer. Unlike in \cite{Salmond}, our scheme requires only the knowledge of global encoding maps/vectors of incoming edges at a sink node and not the entire network structure and coding operation performed at each node. 
\item In Section~III-B, we utilize \textit{traceback} \cite{LP} instead of running the multiple-vertex version of the SP algorithm which results in reduction in the number of operations. Application and advantage of using traceback over multiple-vertex SP algorithm for decoding at sinks with general demand is demonstrated.
\item We discuss the utility and the computational complexity of the proposed technique in Section~IV. We give the number of semiring operations required to perform single- and all-vertex SP algorithm for a class of MPF problem where we are interested not in the marginal function at a particular vertex but in the values of the variables/arguments in the local domain of that vertex that causes that marginal function to attain certain value in the semiring, for example, maximum in a max-sum or max-product semiring or minimum in a min-sum or min-product semiring. We call such problems as arg-MPF problems and refer to the application of the SP algorithm to such problems as the arg-SP algorithm. We show that the number of semiring operations required in performing single-vertex arg-SP algorithm with traceback is strictly less than that of all-vertex SP algorithm (see Table~\ref{tab_nops} for a comparison). Hence, the decoding complexity of a network code using SP decoding with traceback is strictly less than that without using traceback. For sink nodes which demand all the source messages, the notion of \emph{fast decodable network codes} is defined and a sufficient condition for the same is identified. 
\item In Section~V, we consider the in-network function computation problem wherein the sink nodes demand a function of source messages. A network code for such a problem ensures computation of the value of the demanded function at a sink node given the coded messages on its incoming edges and not the reproduction of the values of the arguments of the function. Thus, multiple message vectors may evaluate to the same incoming coded messages and the demanded function value. In Section~V-B, we show that obtaining one such message vector can be posed as an MPF problem and that obtaining it suffices for computation of the demanded function. Subsequently, we give a method to construct a factor graph for each sink node and use the SP algorithm to solve the MPF problem.
\end{itemize}

We present a brief overview of the SP algorithm in Section~II. Preliminaries of in-network function computation are given in Section~V-A. We conclude the paper with a discussion on scope for further work in Section~VI.

%For example, in maximum a posteriori probability or minimum distance decoding of an error-correcting code, we require the codeword that maximizes the a posteriori probability or one that minimizes the distance between the transmitted and the erroneous received codeword and not the value of the maximum of a posteriori probabilities of all possible codewords given the received erroneous codeword or the minimum distance between the received erroneous codeword and all possible codewords.
%%%%%%%%%%%%%%%%%%%%%%%%%%%%%%%%%%%%%%%%%%%%%%%%%%%%%%%%%%
\section{The Sum-Product Algorithm and Factor Graphs}
In this section, we review the computational problem called the MPF problem and specify how SP algorithm can be used to efficiently solve such problems. An equivalent method to efficiently solve MPF problems is given in \cite{Aji} and is called the \emph{generalized distributive law} (GDL) or the \emph{junction tree algorithm}. The simplest example of SP algorithm offering computational advantage is the distributive law on real numbers, $a\cdot (b+c)=a\cdot b+a\cdot c$; the left hand side of the equation requires fewer operation than the right hand side. Generalization of addition and multiplication is what is exploited by the SP (or the junction tree) algorithm in different MPF problems. The mathematical structure in which these operations are defined is known as the commutative semiring \cite{Aji}. 

\begin{definition}
A commutative semiring is a set $R$, together with two binary operations ``$+$'' (\textit{addition or sum}) and ``$\cdot$'' (\textit{multiplication or product}), which satisfy the following axioms:
\begin{enumerate}
\item The operation ``$+$'' satisfies closure, associative, and commutative properties; and there exists an element ``$0$'' (\textit{additive identity}) such that $r+0=r$ for all $r\in R$.
\item The operation ``$\cdot$'' satisfies closure, associative, and commutative properties; and there exists an element ``$1$'' (\textit{multiplicative identity}) such that $r\cdot 1=r$ for all $r\in R$.
\item The operation ``$\cdot$'' \textit{distributes} over ``$+$'', i.e., $r_1\cdot r_2 + r_1\cdot r_3 = r_1\cdot (r_2+r_3)$ for all $r_1,r_2,r_3\in R$
\end{enumerate}
\end{definition}

Different semirings are used for different MPF problem, each with a different notion of ``$+$'' and ``$\cdot$''. Some examples are listed below.
\begin{enumerate}
\item Application of the SP algorithm to the discrete Fourier transform yields the FFT algorithm; the semiring is the set of complex numbers with the usual addition and multiplication \cite{Frey,Aji}.
\item ML decoding of binary linear codes is also an MPF problem and application of SP algorithm yields the Gallager-Tanner-Wiberg decoding algorithm over a Tanner graph; the semiring is the set of positive real numbers with ``$\min$'' as sum and ``$+$'' as product, called the min-sum semiring \cite{Frey,Aji}. The BCJR algorithm for decoding turbo codes and the LDPC deocoding algorithm are some other applications of the SP algorithm.
\item Application to the ML sequence estimation, for instance in decoding convolutional codes, yields the Viterbi algorithm \cite{Aji}; the semiring is again the min-sum semiring.
\item Recently, the GDL has been shown to reduce the ML decoding complexity of space-time block codes in \cite{LP}; the semiring applicable is the min-sum semiring of complex number. The authors introduced traceback for GDL and used it to further lower the number of operations.
\end{enumerate}

Thus, the SP algorithm and the GDL subsume as special cases many well known algorithms. 

\subsection{MPF Problems in the Boolean Semiring}
The Boolean semiring is the set $\{0,1\}$ together with the usual Boolean operations $\vee$ (OR) and $\wedge$ (AND). We denote it by $R=(\{0,1\},\vee,\wedge)$. The elements $0$ and $1$ are the \emph{additive} and \emph{multiplicative identities} respectively. The MPF problem defined for this semiring is described below. Let $x_1,x_2,\ldots,x_n$ be a collection of variables taking values in finite alphabets $A_1,A_2,\ldots,A_n$, respectively. For $I=\{i_1,\ldots,i_k\}\subseteq [n]$, let $x_I=(x_{i_1},\ldots,x_{i_k})$ and $A_I=A_{i_1}\times \ldots \times A_{i_k}$. Let $\mathcal{S}=\{S_1,S_2,\ldots,S_M\}$ be a family of $M$ subsets of $[n]$ such that for each $j\in [M]$, there is a function $h_j:A_{S_j}\rightarrow R$. These functions are called the \emph{local functions}, the set of variables in $x_{S_j}$ is called the \emph{local domain} of $h_j$, and $A_{S_j}$ is the associated \emph{configuration space}. The \emph{global function}, $g: A_{[n]}\rightarrow R$ and the \emph{marginal function}, $g_I: A_I\rightarrow R$, associated with a subset $I$ of $[n]$ are defined as follows:
\begin{align} \nonumber
g(x_1,x_2,\ldots,x_n)=\bigwedge _{j=1}^M h_j(x_{S_j})
\end{align}
and
\begin{align} 
\label{eqn_marg}
g_I(x_I) = \qquad\bigvee _{\mathclap{x_{[n]\backslash I}\in A_{[n]\backslash I}}}\quad g(x_1,x_2,\ldots,x_n).
\end{align}

If we are interested in the support of the marginal $g_I(x_I)$, then the instance $x_I^*$, if unique, of $x_I$ for which $g_I(x_I^*)=1$ is obtained as follows:
\begin{align} \label{eqn_supt}
x_I^*=\mathrm{supt}\; g_I(x_I),
\end{align}
and for $J\subseteq I$, $x_J^*$ (if unique) can be obtained as follows:
\begin{align} \label{eqn_supt_sub}
%x_J^*=\underset{J}{\mathrm{supt}}\;\bigvee _{\{x_{I\backslash J}\}} g_I(x_I).
x_J^*=\underset{J}{\mathrm{supt}}\; g_I(x_I).
\end{align}
If the instances of $x_I$ obtained using \eqref{eqn_supt} are not unique, i.e., the support contains more than one $I$-tuple for which $g_I$ evaluate to $1$, then all these can be collected in a set, say, $B_I$, where $B_I=\mathrm{supt}\; g_I(x_I)\subseteq A_I$. 
%The \emph{global function}, $g: A_{[n]}\rightarrow R$ and the \emph{marginal function}, $g_i: A_i\rightarrow R$, associated with a variable $x_i$ are defined as follows:
%\begin{align} \nonumber
%g(x_1,x_2,\ldots,x_n)=\bigwedge _{j=1}^M h_j(x_{S_j})
%\end{align}
%and
%\begin{align} 
%\label{eqn_marg}
%g_i(x_i) = \qquad\bigvee _{\mathclap{\{x_{[n]\backslash i}\}\in A_{[n]\backslash i}}}\quad g(x_1,x_2,\ldots,x_n).
%\end{align}
%
%If we are interested in the support of the marginal $g_i(x_i)$, i.e., the instance $x_i^*$, if unique, of $x_i$ for which $g_i(x_i^*)=1$ is obtained as follows:
%\begin{align} \label{eqn_supt}
%x_i^*=\mathrm{supt}\; g_i(x_i).
%\end{align}
%x_i^*=\underset{i}{\mathrm{supt}}\bigvee _{x_i\in A_i} g_i(x_i)
This is the arg-MPF problem for the Boolean semiring where an instance of some variables (arg) that causes a marginal function (MPF) to evaluate to $1$ is required. Other examples of arg-MPF problems include decoding of classical error-correcting codes, ML sequence detection using Viterbi algorithm, and ML decoding of space-time block codes in appropriate min-sum semirings; in all these problem we are interested in obtaining the instance of the variables that cause the marginal functions to evaluate to an element in the semiring which is the least when compared to evaluations at other possible instances of the variables. Similarly, over the Boolean semiring, the instances of a subset of variables which causes some marginal function to take value $0$ or $1$ may be of interest. When evaluation to $1$ is required, we use \eqref{eqn_supt} to obtain such instance(s).

\begin{remark}\label{rem_supt_or}
For a binary-valued function $f$ of $n$ variables $x_1,\ldots,x_n$ such that $x_i\in A_i$ for all $i\in [n]$, where $A_i$s are finite alphabets, finding a vector in its support, $\mathrm{supt}\;f(x_{[n]})$, and outputting OR of all the values it takes for different instances of input variables, $\bigvee_{x_{[n]}\in A_{[n]}}\;f(x_{[n]})$, can be seen as the same operation with different outputs. If a table of values of $f$ for different instances of $x_{[n]}$ is given, both go through the values in the columns of function values, and when a $1$ is encountered for the first time, the former outputs the instance of input variables and the latter outputs $1$. Thus, both these operations require same number of comparison which is at most $A_{[n]}-1$.
\end{remark}
%Finding $\mathrm{supt}$ of a Boolean function entails comparing values of the function for different instances of the input variable and outputting the instance(s) for which the function takes value $1$. ORing all possible values of a Boolean function can be seen as the process of comparing values of the function for different instances of the input variable and outputting
\begin{remark}\label{rem_supt_ops}
The number of comparisons required in \eqref{eqn_supt_sub} is independent of $J$ and is at most $|A_I|-1$.
\end{remark}

The Boolean satisfiability problem is an example of the MPF problem over the Boolean semiring \cite{Frey}. Given a set of $M$ Boolean expressions in $n$ variables, a Boolean satisfiability problem asks whether there exists an assignment of $0$ or $1$ to the variables such that all the expressions evaluate to $1$ simultaneously. For example, let $h_1=x_1\vee x_2$ and $h_2=x_2\wedge(x_3\vee x_4)$ be two Boolean expressions in $4$ variables and the objective is to determine whether out of $16$ $(=2^4)$ possible values of $(x_1,x_2,x_3,x_4)$ there exists one for which both $h_1$ and $h_2$ evaluate to $1$. 
\begin{align} \nonumber
h(x_1,x_2,x_3,x_4)=\qquad\bigvee _{\mathclap{(x_1,x_2,x_3,x_4)\in \{0,1\}^4}} \;\; h_1(x_1,x_2)\wedge h_2(x_2,x_3,x_4).
\end{align}
The function $h$ evaluates to $1$ if there exists one such assignment and $0$ otherwise. The function $h$ can be taken as the marginal function of the global function $h_1(x_1,x_2)\wedge h_2(x_2,x_3,x_4)$ associated with the set $\{x_1,x_2,x_3,x_4\}$. The assignment, if unique, that satisfies all the Boolean expressions is $\mathrm{supt}\; h(x_1,x_2,x_3,x_4)$. If multiple assignments satisfy the expressions, then they can be collected in a set as stated before. These two cases, unique and non-unique solutions, will arise in decoding network codes and in-network function computation problems respectively.

Solving a system of $M$ linear or polynomial equations in $n$ variables, say $p_1(x_{[n]})=c_1, p_2(x_{[n]})=c_2, \ldots,p_M(x_{[n]})=c_M$ over a finite field, where $c_1,\ldots,c_M$ are constants, can also be posed as an arg-MPF problem over the Boolean semiring as follows:
\begin{align*} 
x_{[n]}^*&=\mathrm{supt}\; \bigwedge _{i=1}^M \delta(p_i(x_{[n]}),c_i),
\end{align*}
where $\delta$ is the Kronecker delta function defined as
\[
 \delta(a,b) =
  \begin{cases}
   0, & \text{if}\; a \neq b \\
   1, & \text{if}\; a = b.
  \end{cases}
\]
The local functions are $\delta(p_1(x_{[n]}),c_1),\ldots,\delta(p_M(x_{[n]}),c_M)$.
%For example, for an error-correcting linear block code of length $n$ and dimension $k$ over a $q$-ary finite field $\mathbb{F}_q$, the behavior is a $k$-dimensional subspace of $\mathbb{F}_q^n$ (configuration space) whose elements are the codewords (valid configurations). 

%For an error-correcting block code of length $n$ with $M$ codewords over $\mathbb{F}_q$, the behavior is a size $M$ subset of valid codewords in $\mathbb{F}_q^n$.
Compared to the MPF problems, the arg-MPF problems requires an additional step of obtaining the desired support set. %In the sequel we use the terms MPF for both of them; it will be clear from the context which problem we are referring to.

\subsection{The SP Algorithm}
The SP algorithm is an efficient way of computing the marginal functions \eqref{eqn_marg}, which may require $\mathcal{O}(A_{[n]})$ operations if computed in a brute-force manner. It involves iteratively passing \emph{messages} along the edges of a \emph{factor graph}, $G=(V\cup W,E)$, associated with the given MPF problem. Let $Z=V\cup W$. The factor graph is a bipartite graph. Vertices in $V$ are called variable nodes; one for each variable $x_i$ for all $i\in [n]$ ($|V|=n$) and are labeled $x_i$. The local domain and configuration space associated with a variable node with label $x_i$ are $\{x_i\}$ and $A_i$ respectively. A variable node does not have a local function. The vertices in $W$ are called the factor nodes; one for each local function $h_j(x_{S_j})$ for all $j\in [M]$ ($|W|=M$) and are labeled $h_j$. For a factor node with label $h_j$, its local kernel is $h_j(x_{S_j})$, local domain is $x_{S_j}$, i.e., the set of variables which are its arguments, and the configuration space is $A_{S_j}$. A variable node $x_i$ is connected to a factor node $h_j$ iff $x_i$ is an argument of $h_j$, i.e., $i\in S_j$. %For convenience we assume that for a variable node $x$ the local domain and local kernel are $x$ and $1$ respectively.

Let $N(x_i)$ denote the set of factor nodes adjacent to the variable node $x_i$, i.e., set of local functions with $x_i$ as an argument, and $N(h_j)$ denote the set of variable nodes adjacent to the factor node $h_j$, i.e., the local domain $x_{S_j}$ of $h_j$. The directed message passed from a variable node $x_i$ to an adjacent factor node $h_j$ and vice versa are as follows:
\begin{align}
\label{eqn_msg1}
\mu _{x_i\rightarrow h_j}(x_i)= \bigwedge _{h'\in N(x_i)\backslash h_j}\mu _{h'\rightarrow x_i} (x_i)
\end{align}
\begin{align}
\label{eqn_msg2}
\mu _{h_j \rightarrow x_i}(x_i)= \bigvee _{x_{S_j\backslash i}\in A_{S_j\backslash i}} h_j(x_{S_j}) \bigwedge _{x'\in N(h_j)\backslash x_i}\mu _{x'\rightarrow h_j}(x')
\end{align}
The messages are actually tables of values containing value of the messages corresponding to different values of their arguments. 

Depending on the requirement, we may need to evaluate marginal(s) at only one, a few or all nodes in the factor graph; the versions of SP algorithm applied to these cases are referred to as the single-vertex, multiple-vertex, and all-vertex SP algorithm respectively. In all these cases, all the messages are initially directed to one node, called the root, i.e., all the edges are directed towards the root and the messages are passed along the direction of the edge. The algorithm starts at the leaf nodes (nodes with degree one) with these nodes passing messages to the adjacent nodes. If a leaf node is a variable node, then the message value is $1$ (the semiring multiplicative identity) for all possible values the variable can take, i.e. $\mu _{x_i\rightarrow h_j}(x_i)=1$ for all $x_i\in A_i$, where $h_j(x_i)$ is the unique local function with $x_i$ as an argument. If a leaf node is a factor node, then its local domain will contain only one variable, say $x_i$, and $\mu _{h_j \rightarrow x_i}(x_i)=h_j(x_i)$ for all $x_i\in A_i$. Once a vertex has received messages from all but one of its neighbors, it computes its own message using \eqref{eqn_msg1} or \eqref{eqn_msg2}, and passes it to the neighbor from which it has not yet received the message. This continues until the root has received messages on all its edges. Now the root computes and passes the messages to its neighbors and the process continues with messages being passed on each edge in the opposite direction, i.e., away from the root. This message passing terminates when all the nodes at which marginals are required to be computed have received messages from all its neighbors. After receiving messages from all its neighbors, a variable node $x_i$ computes its marginal function $g_i$ as follows:
\begin{align*} 
g_i(x_i) = \bigwedge _{h' \in N(x_i)}\mu_{h' \rightarrow x_i}(x_i),
\end{align*}
and the value $x_i^*$ for which $g_i(x_i)=1$ is 
\begin{align*} 
x_i^*=\mathrm{supt}\; g_i(x_i).
\end{align*}
The marginal function at a factor node $h_j$ can be computed as follows:
\begin{align*} 
g_j(x_{S_j}) = h_j(x_{S_j})\bigwedge _{x' \in N(h_j)}\mu_{x' \rightarrow h_j}(x'),
\end{align*}
and required supports can be computed using \eqref{eqn_supt} or \eqref{eqn_supt_sub}. As stated in Section~II-A, if there are multiple instances of an argument for which a marginal function evaluate to $1$, then they can be collected in a set.

To obtain the correct value of the required marginal functions, it is essential that the factor graph be free of cycles \cite{Frey}. If there are cycles, these may not be the correct values and the sets $B_i$ and $C_j$ may contain some undesired instances of arguments for which the marginals take value $0$, in addition to the support of the marginals. We use \textit{variable stretching} (refer to \cite[Sec.~VI-B and C]{Frey} for a detailed description) to eliminate cycles; this is explained below. Let $G$ be a connected factor graph with cycles, $N(x)$ be the neighbors of a variable node $x$ in $G$, and let $T$ be a spanning tree of $G$. Every variable node $x$ is connected to all the factor nodes in $N(x)$ in $G$ but not in $T$. In $T$, there is a unique path from every variable node $x$ to the factor nodes in $N(x)$, since it is a tree. For each variable $x$, add $x$ to the local domains of all the nodes in the aforementioned unique paths; this is referred to as stretching variable $x$. The resulting factor graph with enlarged local domains is acyclic and is denoted by $G'$. The SP algorithm applied to $G'$ will give the exact marginal functions \cite[Sec.~VI-C]{Frey}. If the factor graph is not connected, then we find a spanning tree of each connected component and perform variable stretching in each of the trees. This method is exemplified below.
\begin{figure}[h]
\centering
\includegraphics[scale=0.45]{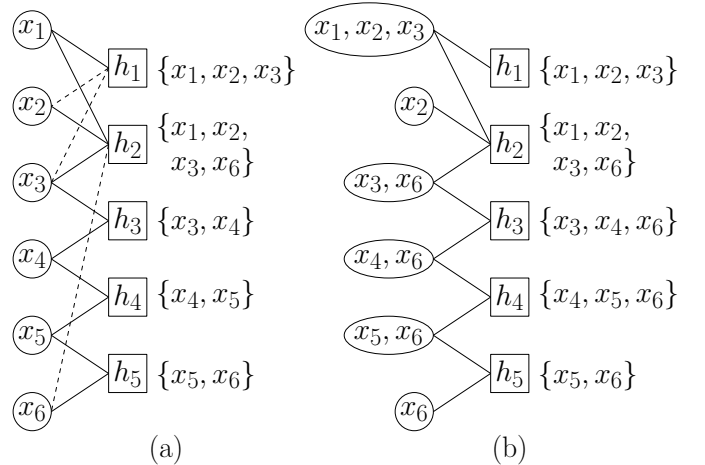}
\caption{Variable stretching (a) A factor graph with cycles and (b) its acyclic version after stretching variables $x_2$, $x_3$, and $x_6$ along the unique path from $x_2$ to $h_1$, $x_3$ to $h_1$, and $x_6$ to $h_2$ respectively in the factor tree obtained by removing dashed edges in (a).}
\label{vert_str}
%\vspace{-14pt}
\end{figure}

A factor graph with cycles is given in Fig.~\ref{vert_str}(a) and a spanning tree is obtained by deleting the dashed edges. In Fig.~\ref{vert_str}(a), $N(x_2)=\{h_1,h_2\}$, but $x_2$ is not connected to $h_1$ in the spanning tree and hence is stretched along the unique path $x_2-h_2-x_1-h_1$ in Fig.~\ref{vert_str}(a) between $x_2$ and $h_1$ resulting in the addition of $x_2$ to the local domain of variable node $x_1$ (local domain of $h_2$ already contains $x_2$). Similarly, $x_3$ is added to the local domain of variable node $x_1$ which lies in the unique path $x_3-h_2-x_1-h_1$ between $x_3$ and $h_1$, and $x_6$ is added to the local domains of $x_5$, $h_4$, $x_4$, $h_3$, and $x_3$ which lie on the unique path $x_6-h_5-x_5-h_4-x_4-h_3-x_3-h_2$ from $x_6$ to $h_2$. The resulting factor graph is depicted in Fig.~\ref{vert_str}(b).

The local functions of the factor nodes remain the same and the variable nodes are now labeled by the new enlarged local domains. In the new acyclic factor graph, we denote the variable and factor nodes by $v$ and $w$ and their local domains by $S_v$ and $S_w$ respectively. The SP algorithm on the modified graph proceed as before; message passing starts at the leaf node and terminates when each node has received a message from all its neighbors. The message passed from a variable node $v$ to a factor node $w$ in the new graph is 
%\begin{align} \label{eqn_msg_mod1}
%\mu _{v\rightarrow w}(x_{S_v\cap S_w})=\bigvee _{{\{x_{S_v\backslash S_w}\}}}\;\bigwedge _{{w'\in N(v)\backslash w}}\;\mu _{w'\rightarrow v} (x_{S_{w'}\cap S_v}),
%\end{align}
\begin{align} \label{eqn_msg_mod1}
\mu _{v\rightarrow w}(x_{S_v\cap S_w})=\quad\bigvee _{\mathclap{x_{S_v\backslash S_w}\in A_{S_v\backslash S_w}}}\qquad\qquad\,\bigwedge _{\mathclap{w'\in N(v)\backslash w}}\,\mu _{w'\rightarrow v} (x_{S_{w'}\cap S_v}),
\end{align}
and that passed from a factor node $w$ to a variable node $v$ is
%\begin{align} \label{eqn_msg_mod2}
%\mu _{w\rightarrow v}(x_{S_w\cap S_v})=\bigvee _{\mathclap{\{x_{S_w\backslash S_v}\}}}\;h _w(x_{S_w})\;\bigwedge _{\mathclap{v'\in N(w)\backslash v}}\;\mu _{v'\rightarrow w} (x_{S_{v'}\cap S_w}),
%\end{align}
\begin{align} \label{eqn_msg_mod2}
\mu _{w\rightarrow v}(x_{S_w\cap S_v})=\;\;\bigvee _{\mathclap{x_{S_w\backslash S_v}\in A_{S_w\backslash S_v}}}\,h _w(x_{S_w})\;\;\bigwedge _{\mathclap{v'\in N(w)\backslash v}}\,\mu _{v'\rightarrow w} (x_{S_{v'}\cap S_w}),
\end{align}
where $h_w$ is the local function of factor node $w$. The marginal function of a variable node $v$ is
\begin{align} \label{eqn_state_mod1}
g_v(x_{S_v}) = \bigwedge _{w' \in N(v)}\mu_{w'\rightarrow v}(x_{S_{w'}\cap S_v}),
\end{align}
and that of a factor node $w$ is 
\begin{align} \label{eqn_state_mod2}
g_w(x_{S_w}) = h_w(x_{S_w})\bigwedge _{x' \in N(h_w)}\mu_{x' \rightarrow h_w}(x').
\end{align}
%Its support is
%\begin{align} \label{eqn_supt}
%x_{S_v}^*=\mathrm{supt}\; g_v(x_{S_v}),
%\end{align}
%and for $J\subseteq S_v$, $x_J^*$, i.e., the values of some variables in the local domain or $x_{S_v}^*$ restricted to $J$, is
%\begin{align} \label{eqn_supt_sub}
%x_J^*=\underset{J}{\mathrm{supt}}\;\bigvee _{\{x_{S_v\backslash J}\}} g_v(x_{S_v}).
%\end{align}
As before, the required supports can be computed using \eqref{eqn_supt} or \eqref{eqn_supt_sub}. 

From \eqref{eqn_msg_mod1}-\eqref{eqn_state_mod2}, it can be inferred that the number of operations required to compute messages and marginal functions in the SP algorithm will be $\mathcal{O}(A_{z^*})$, where $z^*$ is the node with the largest configuration space $A_{z^*}$.

%%%%%%%%%%%%%%%%%%%%%%%%%%%%%%%%%%%%%%%%%%%%%%%%%%%%%%%%%%
\section{Decoding Network Codes Using The SP Algorithm}
In this section, we show that decoding a network code is an arg-MPF problem over the Boolean semiring. We provide a method to construct factor graph for decoding at a sink node using the SP algorithm. 

Though the factor graph approach \cite{Frey} and the junction tree approach \cite{Aji} are equivalent formulations to solve MPF problems, we prefer the former because of the difference in the amount of preprocessing required to obtain a junction tree as argued below:
\begin{enumerate}
\item The construction of a junction tree for an MPF problem requires \cite[Sec.~IV]{Aji}: \textit{(a)} construction of a \textit{local domain graph} with weighted edges, \textit{(b)} finding a maximum weight spanning tree, \textit{(c)} checking whether the sum of edges weights of the obtained maximum weight spanning tree is equal to $\sum_{j=1}^M|S_i|-n$, if yes then this tree is a junction tree for the MPF problem, otherwise we proceed with \textit{(d)} construction of a \textit{moral graph}, \textit{(e)} obtaining its \textit{minimum complexity triangulation} if it is not already triangulated, \textit{(f)} construction of the \textit{clique graph} of the triangulated moral graph, and \textit{(g)} finding a spanning tree which leads to minimum computational cost. To the nodes of this clique tree, called core in \cite{LP}, the local functions and variables of the MPF problem are attached \cite{Aji} to obtain the junction tree (a local function or a variable node is attached to a node of the core iff its local domain is a subset of the local domain of the said core node). Thus, the GDL always gives the exact solution of the MPF problems.
\item A factor graph is described by the local functions associated with the MPF problem. If it is acyclic, then the SP algorithm gives the exact solution, if not, it gives an approximate solution \cite{Frey}. The SP algorithm is known to perform well even if the factor graph has cycles, for example, in the iterative decoding of LDPC and turbo codes. As explained and exemplified in Section~II-B, cycles in a factor graph can be eliminated by first obtaining a spanning tree of the factor graph with cycles and then performing variable stretching \cite[Sec.~VI-C]{Frey}. The SP algorithm applied to the new acyclic factor graph will yield the exact marginal functions.
\end{enumerate}

\subsection{Network Code Decoding as an MPF Problem}
Given an acyclic network $\mathcal{N=(V,E)}$, the demands at each sink, $D_k,\,k\in [K]$ and a set of global encoding maps, $\{\tilde{f}_e:e\in \mathcal{E}\}$, that satisfy all the sink demands, the objectives at a sink, say $k^{th}$, is to find the instance of $x_{D_k}$ that was generated by the source(s) using the data it receives on its incoming edges, i.e.,
%\begin{align}
%\label{eqn_mpf}
%x_{D_k}^* &= \underset{D_k}{\mathrm{supt}} \bigvee _{x_{[\omega]\backslash D_k}\in F^{\omega-|D_k|}}\underbrace{\bigwedge _{e \in In(T_k)} \delta \left( \tilde{f}_e (x_{[\omega]})\, , \, y_e \right)}_{g^{(k)}(x_{[\omega]})}. 
%\end{align}
\begin{align}\label{eqn_mpf}
x_{D_k}^* = \underset{D_k}{\mathrm{supt}}\;\;\;\bigwedge _{\mathclap{{e \in In(T_k)}}}\; \delta \left( \tilde{f}_e (x_{[\omega]})\,,\,y_e \right)=\;\underset{D_k}{\mathrm{supt}}\,{g^{(k)}(x_{[\omega]})}. 
\end{align}
Here $g^{(k)}$ is the global function of the MPF problem at the $k^{th}$ sink. For an LNC, $\tilde{f}_e(x_{[\omega]})=\mathbf{f}_e\cdot\mathbf{x}$.
%For an LNC, \eqref{eqn_mpf} becomes
%\begin{align} \nonumber
%x_{D_k}^*& = \underset{D_k}{\mathrm{supt}} \bigvee _{\mathbf{x}\in F^\omega}\; {\bigwedge _{e\in In(T_k)}\delta \left( \mathbf{f}_e \cdot \mathbf{x}\, , \, y_e \right)}
%\end{align}

Thus, decoding a network code has the form of an arg-MPF problem over the Boolean semiring wherein we are interested only in some coordinates (specified by $D_k$) of the $\omega$-tuples in the support set and not the value of the global function. 

Since the solution $x_{D_k}^*$ is unique, individual coordinates $j\in D_k$ can be separately computed, i.e., 
\begin{align}
\label{eqn_marg1}
\begin{aligned} 
x_j^* = \underset{j}{\mathrm{supt}} \bigvee _{x_j\in F}\;g_j^{(k)}(x_j) \\ 
g_j^{(k)}(x_j) = \bigvee _{x_{[\omega]\backslash j}\in F^{\omega -1}}g^{(k)}(x_{[\omega])},
\end{aligned}
\end{align}
where $g_j^{(k)}(x_j)$ is a marginal function of the global function $g^{(k)}$.

The factor graph for decoding at sink $T_k,k\in [K]$ is constructed as follows:
\begin{enumerate}
\item Install $\omega$ \textit{variable nodes}, one for each source message. These vertices are labeled by their corresponding source messages, $x_i$. The local kernel of these nodes are 1.
\item Install $|In(T_k)|$ \textit{factor nodes} and label them $\tilde{f}_e, e\in In(T_k)$. The associated local domain of each such vertex is the subset $S_e\subseteq \{x_{[\omega]}\}$ of source messages that participate in that encoding map and the local function is $\delta(\tilde{f}_e(x_{S_e})\, , \, y_e)$.
\item A variable node is connected to a factor node iff the source message corresponding to that variable node participates in the encoding map corresponding to the said factor node.
\end{enumerate}

%Whenever convenient, we will assume that the local function of a variable node is $1$. 
General form of a factor graph and the same for the two sink nodes of the butterfly network are given in Fig.~\ref{junc_grph}.%(compare \cite[Fig.~3(c)]{Salmond} and Fig.~\ref{junc_grph}(b)).
\begin{figure}[h]
\centering
\includegraphics[scale=0.5]{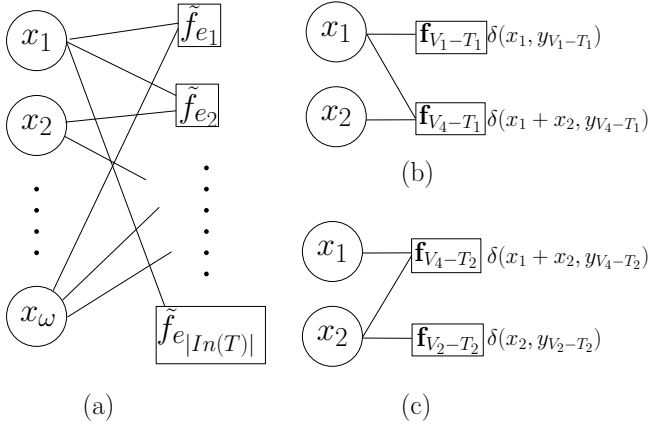}
\caption{(a) General form of a factor graph. (b) Factor graphs for $T_1$ and (c) $T_2$ of the butterfly network. Local function are given adjacent to the factor nodes.}
\label{junc_grph}
\vspace{-7pt}
\end{figure}

As specified in Section~II-B, the SP algorithm yields the correct value of the source messages if the factor graph is a tree. If not, then the cycles in the factor graph are eliminated via variable stretching on a spanning tree of the factor graph. Messages are computed using \eqref{eqn_msg_mod1} and \eqref{eqn_msg_mod2}, marginals using \eqref{eqn_state_mod1} and \eqref{eqn_state_mod2}, and the desired supports using \eqref{eqn_supt} or \eqref{eqn_supt_sub}.

%As stated before, once a node (say $z$), has received message from all the adjacent nodes, its marginal, $g_z(x_{S_z})$, can be computed. Let $x_{D_k}\cap x_{S_z}= x_{B}$. The value of the subset $B$ of the requisite source messages at the $k^{th}$ sink node is
%\begin{align}
%x_B^*=\underset{B}{\mathrm{supt}} \bigvee _{\{x_{S_z}\}} h_z(x_{S_z})\bigwedge _{z'\in N(z)} \mu _{z'\rightarrow z} (x_{S_{z'}\cap S_z}),
%\end{align}

\subsection{Traceback}
Since decoding network codes is an arg-MPF problem and not an MPF problem, we can use traceback \cite{LP} to reduce the number of operations.

We first demonstrate how traceback is used for decoding at a sink which demands all the source messages. If there exists a vertex whose local domain is the entire message set, then all the messages can be obtained by running single-vertex SP algorithm with this node as the root. If not, then assume that the single-vertex SP algorithm is run with a vertex, say $r$, as the root and the values $x_{S_r}^*$, $S_r\subset [\omega]$, of some source messages have been ascertained. Now, partition the local domain of a neighboring node $z$, as $x_{S_z}=x_A \cup x_B$, where $A=S_z\backslash S_r$ and $B=S_z\cap S_r$. Since $x_{S_r}^*$ is known, the value $x_B^*$ of $x_B$ for which $g_z(x_A,x_B)=1$ is also known. Then $x^*_A$ can then be obtained as follows:
\begin{align} \label{eqn_trcbk_mc}
\nonumber x^*_A & = \mathrm{supt}\;g_z(x_{S_z})=\mathrm{supt}\;g_z(x_A,x_B^*)\\
\nonumber & = \mathrm{supt}\; \mu_{r\rightarrow z}(x_B^*)\; \lambda_z(x_A,x_B^*)\\
 & = \mathrm{supt}\; \lambda_z(x_A,x_B^*),
\end{align}
where
\begin{align*}
\lambda_z(x_A,x_B)=h_z(x_{S_z})\bigwedge_{u\in N(z)\backslash r} \mu_{u\rightarrow z}(x_{S_u\cap S_z})
\end{align*}
is the partial marginal computed at $z$ while passing the message $\mu_{z \rightarrow r}(x_B)$ to the root $r$; the two are related as follows:
\begin{align*}
\mu_{z \rightarrow r}(x_B)&=\bigvee_{x_A}\;h_z(x_{S_z})\bigwedge_{u\in N(z)\backslash r} \mu_{u\rightarrow z}(x_{S_u\cap S_z})\\&=\bigvee_{x_A\in F^{|S_z|}}\!\lambda_z(x_A,x_B),
\end{align*}
where $h_z(x_{S_z})$ is the local function if $z$ is a factor node and assumed to be $1$ for all $x_{S_z}\in F^{|S_z|}$ if $z$ is a variable node. Thus, with traceback, neither computation of $\mu_{r\rightarrow z}(x_{S_r\cap S_z})$ nor that of marginal function $g_z(x_{S_z})$ is needed for any neighbor $z$ of $r$. %After the solution at the root is found in the single-vertex SP algorithm, the solutions at the neighboring nodes is obtained, then the neighbors of these nodes, and so on; this process continues until all the desired message values are obtained. Unlike in the all- (or multiple)-vertex SP algorithm, computation of messages directed away from the root is not required in single-vertex SP algorithm with traceback, leading to reduction in the number of semiring operations performed.

The traceback step is performed until the values of all the source messages are obtained. This is done by obtaining source message values at a chosen root node $r$, followed by traceback on its neighbors, then the neighbors of neighbors of $r$, and so on. This can lead to considerable reduction in number of operations and is exemplified in Section III-C, Example~2.

We now present how traceback is performed in decoding network codes at a sink which demands only a subset of the source messages. Let $D\subset [n]$ denote the demand of a sink node. If there exists a vertex with local domain same as or containing $D$, then all the desired source messages can be obtained by running single-vertex SP algorithm with this node as the root. If not, then assume that the single-vertex SP algorithm is run with a vertex, say $r$, as the root. Let $z$ be a neighbor of $r$ with local domain $x_{S_z}$ partitioned as given in Fig.~\ref{fig_trcbk_bc}.
\begin{figure}[h]
\centering
\includegraphics[scale=0.5]{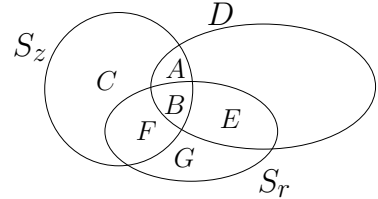}
\caption{Venn diagram of the demand set $D$ and local domains of $z$ and $r$.}
\label{fig_trcbk_bc}
\vspace{-7pt}
\end{figure}

Once $r$ has received all the messages, the marginal function $g_r(x_{S_r})$ is computed as follows:
\begin{align}\label{bc_marg}
g_r(x_{S_r})=h_r(x_{S_r})\mu_{z\rightarrow r}(x_B,x_F)\bigwedge_{\mathclap{u\in N(r)\backslash z}}\mu_{u\rightarrow r}(x_{S_u\cap S_r}).
\end{align}
Since the network code ensures decoding of only $(x_B,x_E)$, and not $(x_F,x_G)$, there exists a unique instance of $(x_B,x_E)$, denoted by $(x_B^*,x_E^*)$, and multiple instances of $(x_F,x_G)$, one of which is denoted by $(\widehat{x}_F,\widehat{x}_G)$, such that $g_r(x_B^*,x_E^*,\widehat{x}_F,\widehat{x}_G)=1$, i.e.,
\begin{align*}
(x_B^*,x_E^*,\widehat{x}_F,\widehat{x}_G)=\mathrm{supt}\;g_r(x_B,x_E,x_F,x_G).
\end{align*}
By Remark~\ref{rem_supt_ops}, computation of $(x_B^*,x_E^*,\widehat{x}_F,\widehat{x}_G)$ incurs no additional operations over computation of only $(x_B^*,x_E^*)$; both require at most $q^{|S_r|}-1$ comparisons. The message to be passed form $r$ to $z$ is 
\begin{align}\label{tb_bc}
\mu_{r\rightarrow z}(x_B,x_F)&=\bigvee_{\mathclap{(x_E,x_G)}}\;\;h_r(x_{S_r})\;\; \bigwedge_{\mathclap{u\in N(r)\backslash z}}\;\;\mu_{u\rightarrow r}(x_{S_u\cap S_r}).
\end{align}
Since $g_r(x_B^*,x_E^*,\widehat{x}_F,\widehat{x}_G)=1$, from \eqref{bc_marg} we have that 
\begin{align*}
h_r(x_{S_r})\bigwedge_{u\in N(r)\backslash z}\mu_{u\rightarrow r}(x_{S_u\cap S_r})=1
\end{align*}
for $x_{S_r}=(x_B^*,x_E^*,\widehat{x}_F,\widehat{x}_G)$, and consequently from \eqref{tb_bc}, $\mu_{r\rightarrow z}(x_B^*,\widehat{x}_F)=1$.

Now at $z$, $(x_A^*,\widehat{x}_C)$ is computed as follows:
\begin{align*}
(x_A^*,\widehat{x}_C)&=\mathrm{supt}\;g_z(x_A,x_B^*,x_C,\widehat{x}_F)\\ 
&=\mathrm{supt}\;\mu_{r\rightarrow z}(x_B^*,\widehat{x}_F)\;\lambda_z(x_A,x_B^*,x_C,\widehat{x}_F)\\
&=\mathrm{supt}\;\lambda_z(x_A,x_B^*,x_C,\widehat{x}_F),
\end{align*}
where $\mu_{r\rightarrow z}(x_B^*,\widehat{x}_F)=1$ as argued above and $\lambda_z(x_{S_z})$ is the partial message computed at $z$ while passing message $\mu_{z\rightarrow r}(x_B,x_F)$ to $r$; the two are related as follows:
\begin{align*}
\mu_{z\rightarrow r}(x_B,x_F)&=\bigvee_{(x_A,x_C)}h_z(x_{S_z})\bigwedge_{u\in N(z)\backslash r}\mu_{u\rightarrow z}(x_{S_u\cap S_z})\\&=\bigvee_{(x_A,x_C)}\lambda_z(x_{S_z}).
\end{align*} 
As before, by Remark~\ref{rem_supt_ops}, computation of $(x_A^*,\widehat{x}_C)$ incurs no additional operations over computation of only $x_A^*$; both require at most $q^{|A|+|C|}-1$ comparisons. 

This process is repeated on other neighbors of $r$, followed by neighbors of neighbors of $r$, and so on until values of all the messages in the demand set have been determined.% The above procedure is applicable even if the set $A$ is empty.

Thus, for a sink with a general demand set $D\subseteq [\omega]$, in the single-vertex SP algorithm with traceback, first the single-vertex SP algorithm is used with some node (preferably one whose local domain includes some of the demanded messages) as the root wherein all messages are directed towards it, its marginal function is computed, and then the support of the marginal. In the traceback step, appropriate supports of partial marginals, which were already computed while passing the messages towards the root, of some more nodes are computed; this involves only comparison operations. Let the root together with the set of nodes involved in the traceback step, i.e., nodes whose union of local domains contain the demand set, be denoted by $Z'$. 

In the multiple-vertex SP algorithm, computing marginal functions of nodes in $Z'$ is enough since their appropriate support will satisfy the sink's demands. As above, first the single-vertex SP algorithm is used. When the root has received all the messages, it passes messages to its neighbors and message passing continues until all other nodes in $Z'$ have received all the messages. After this, marginal functions of the nodes in $Z'$ is computed, and then appropriate support (only the intersection of local domain of the root and $D$) of the marginal function is computed. 

By Remark~\ref{rem_supt_ops}, computation of supports in both the single-vertex with traceback and in the last step of the multiple-vertex SP algorithm incurs the same computational cost. But the latter involves computation of additional messages (directed away from root) and marginals (of nodes in $Z'$ other than the root). Hence, the traceback step reduces the number of operations required in decoding a network code. We refer to the use of single-vertex SP decoding followed by traceback as the \emph{reduced complexity SP decoding}. Exact number of operations required in the all-vertex SP algorithm and single-vertex SP algorithm with traceback is derived in Section~IV.

%In traceback, we exploit the fact that we require only $\mathrm{supt}_A g_{z}$ and not the value of $g_{z}$ which would have required passing of the message $\mu_{r\rightarrow z}(x_{S_r\cap S_z})$ from $r$ to $z$. 

\subsection{Illustrations}
We now present some examples illustrating use of the SP algorithm to decode network codes.

\begin{example}
Consider the butterfly network of Fig.\ref{b_fly}. Here $q=\omega =2$. The factor graphs for two sink nodes are given in Fig.~\ref{junc_grph}(b) and (c). The messages passed and state computations for decoding at $T_1$ are as follows:
\begin{align*} 
\mu_{x_2\rightarrow \mathbf{f}_{V_4-T_1}}(x_2)&= 1, \\ 
\mu_{\mathbf{f}_{V_1-T_1}\rightarrow x_1}(x_1)&= \delta(x_1,y_{V_1-T_1}), \\   
\mu_{\mathbf{f}_{V_4-T_1}\rightarrow x_1}(x_1)&= \bigvee _{x_2}\delta(x_1+x_2,y_{V_4-T_1}),  \\
\mu_{x_1\rightarrow \mathbf{f}_{V_4-T_1}}(x_1)&= \mu_{\mathbf{f}_{V_1-T_1},x_1}(x_1),  \\
g_{x_1}(x_1)&= \mu_{\mathbf{f}_{V_1-T_1},x_1}(x_1) \; \mu_{\mathbf{f}_{V_4-T_1}\rightarrow x_1}(x_1), \\
\mu_{\mathbf{f}_{V_4-T_1}\rightarrow x_2}(x_2)&= \bigvee _{x_1}\delta(x_1+x_2,y_{V_4-T_1})\; \mu_{\mathbf{f}_{V_1-T_1}\rightarrow x_1}(x_1),  \\ 
g_{x_2}(x_2)&= \mu_{\mathbf{f}_{V_4-T_1},x_2}(x_2),\\
x_1^*&=\mathrm{supt}\;g_{x_1}(x_1),\\
x_2^*&=\mathrm{supt}\;g_{x_2}(x_2).
 \end{align*}
Similar computations apply for $T_2$ also. \hfill $\square$
\end{example}

In the following example we present a network with general demands at sinks and employ the SP algorithm for decoding a vector nonlinear network code for it. We also demonstrate the usefulness of traceback in saving computations of some messages in the factor graph.

\begin{example}
\begin{figure*}[t]
\centering
\includegraphics[scale=0.5]{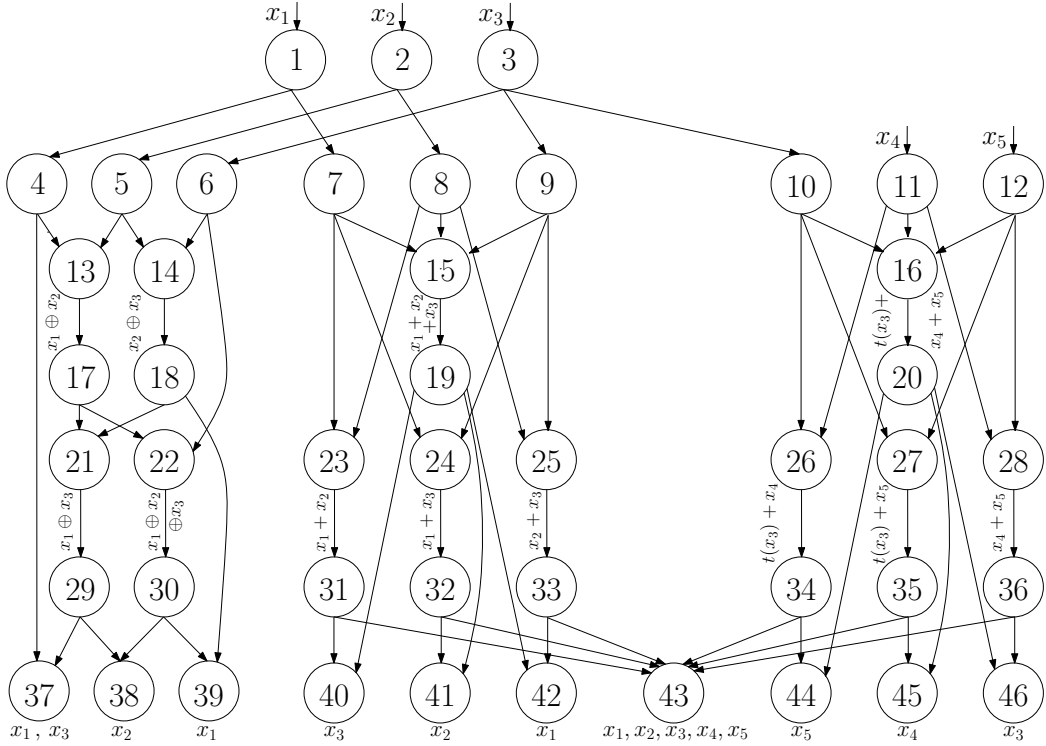}
\caption{The network $\mathcal{N}_3$ of \cite{Insuff}.}
\label{fano}
\end{figure*}
Consider the network given in Fig.~\ref{fano}. The sinks (nodes $37-46$) have general demands which are specified below them. In \cite{Insuff}, the authors showed that this network admits no linear solution over any field and gave a vector nonlinear solution. The source messages $x_i,i\in[5]$ are 2-bit binary words ($q=4,\,\omega = 5$), $+$ denotes addition in ring $\mathbf{Z}_4$, $\oplus$ denotes the bitwise XOR and the function $t(\cdot)$ reverses the order of the 2-bit input. 

The factor graphs for nodes $37$, $40$, and $43$, denoted by $G_{37}$, $G_{40}$, and $G_{43}$ respectively, are given in Fig.~\ref{fano_junc}. The 4-cycle in $G_{40}$ is removed by deleting the dashed edge and stretching variable $x_2$ along the unique path $P$ from $x_2$ to the factor node labeled by $x_1+x_2+x_3$. Similarly, the two 6-cycles in $G_{43}$ are removed by deleting dashed edges and stretching variable $x_3$ along paths $P_1$ and $P_2$; for convenience, nodes are numbered a-k in the acyclic factor graph.
\begin{figure*}[t]
\centering
\includegraphics[scale=0.5]{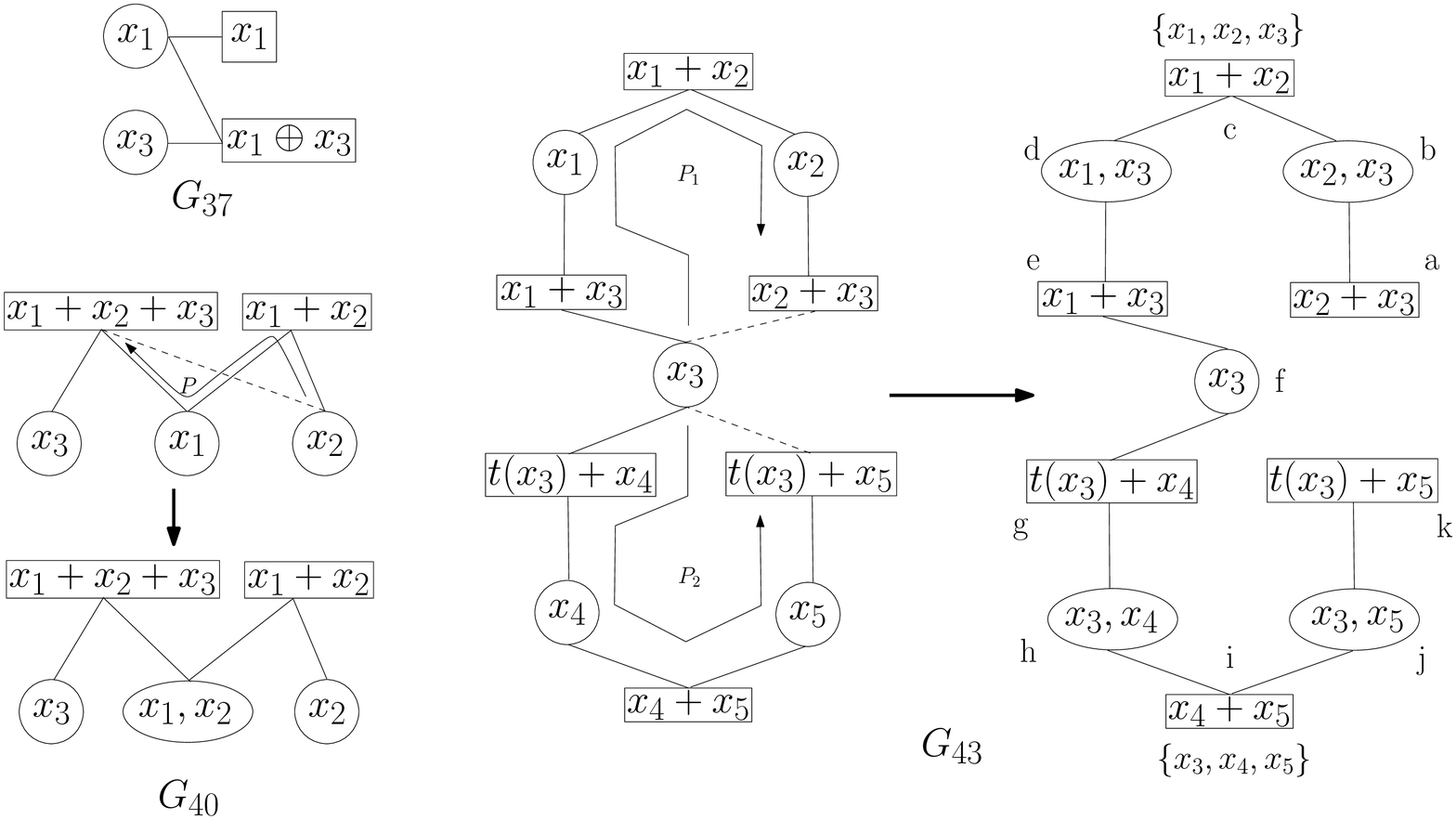}
\caption{The factor graphs for sinks with vertex labels $37$, $40$ and $43$ in network $\mathcal{N}_3$ of Fig.~\ref{fano}. Enlarged local domains after variable stretching are given adjacent to factor nodes.}
\label{fano_junc}
\end{figure*}

We infer from $G_{43}$ that the number of computations required to reproduce all the source messages at $V_{43}$ is only $\mathcal{O}(q^3)$ instead of $\mathcal{O}(q^5)$ (as brute-force decoding would have required). The decoding process at $V_{43}$ is performed by using single-vertex SP algorithm with node ``i'' in $G_{43}$ as the root to compute $x_3$, $x_4$, and $x_5$ followed by traceback to compute $x_1$ and $x_2$. The messages passed towards root are
\begin{align*}
& \mu _{k\rightarrow j}=\mu _{j\rightarrow i}(x_3,x_5)=\delta (t(x_3)+x_5,y_{35-43}),\\
& \mu _{a\rightarrow b}=\mu _{b\rightarrow c}(x_2,x_3)=\delta (x_2+x_3,y_{33-43}),\\
& \mu _{c\rightarrow d}=\mu _{d\rightarrow e}(x_1,x_3)=\bigvee_{\mathclap{x_2}}\delta (x_1+x_2,y_{31-43})\mu _{b\rightarrow c}(x_2,x_3),\\
& \mu _{e\rightarrow f}=\mu _{f\rightarrow g}(x_3)=\bigvee_{x_1}\delta (x_1+x_3,y_{32-43})\mu _{d\rightarrow e}(x_1,x_3),\\
&\text{and}\\
&\mu _{g\rightarrow h}=\mu _{h\rightarrow i}(x_3,x_4)=\delta (t(x_3)+x_4,y_{34-43})\mu _{f\rightarrow g}(x_3).
\end{align*}
Decoding of $x_3$, $x_4$, and $x_5$ is performed at ``i'' by first computing the marginal function $F_{\text{i}}$ using \eqref{eqn_supt} and then computing its support as follows:
\begin{align*}
F_{\text{i}}(x_3,x_4,x_5)&=\delta(x_4+x_5,y_{36-43})\mu _{h\rightarrow i}(x_3,x_4)\mu _{j\rightarrow i}(x_3,x_5),\\
(x_3^*,x_4^*,x_5^*)&=\mathrm{supt}\;F_{\text{i}}(x_3,x_4,x_5).
\end{align*}
Since $x_3^*$ is known, $x_1^*$ and $x_2^*$ are computed using traceback at nodes ``e'' and ``c'' respectively as follows:
\begin{align*}
x_1^*=\mathrm{supt}\;\lambda_e(x_1,x_3^*)
\end{align*}
and
\begin{align*}
x_2^*=\mathrm{supt}\;\lambda_c(x_1^*,x_2,x_3^*),
\end{align*}
where the partial marginal $\lambda_e(x_1,x_3)=\delta (x_1+x_3,y_{32-43})\mu _{d\rightarrow e}(x_1,x_3)$ was computed while passing the message $\mu _{e\rightarrow f}(x_3)$ and $\lambda_c(x_1^*,x_2,x_3^*)=\delta (x_1+x_2,y_{31-43})\mu _{b\rightarrow c}(x_2,x_3)$ was computed while passing the message $\mu _{c\rightarrow d}(x_1,x_3)$. In other words, 
\begin{align*}
\mu _{e\rightarrow f}(x_3)=\bigvee_{x_1}\;\lambda_e(x_1,x_3)
\end{align*}
and
\begin{align*}
\mu _{c\rightarrow d}(x_1,x_3)=\bigvee_{x_2}\;\lambda_c(x_1,x_2,x_3).
\end{align*}
The number of semiring operations required to compute all the messages passed and marginals computed are tabulated in Table~\ref{tab_nops_trcbk}.
\begin{table}[h] 
\caption{Single-vertex SP Algorithm with Traceback}
\label{tab_nops_trcbk}
\centering
\small
\renewcommand{\arraystretch}{1.4}
\begin{tabular}{|c|c|c|c|}
\hline
 & & No. of $\bigwedge$ & No. of $\bigvee$ \\
\hline
$C_1$ & $\mu_{k\rightarrow j}$, $\mu_{j\rightarrow i}$ & $0$ & $0$\\
\hline
$C_2$ & $\mu_{a\rightarrow b}$, $\mu_{b\rightarrow c}$ & $0$ & $0$\\
\hline
$C_3$ & $\mu_{c\rightarrow d}$ & $q^3$ & $q^2(q-1)$\\
\hline
$C_4$ & $\mu_{d\rightarrow e}$ & $0$ & $0$\\
\hline
$C_5$ & $\mu_{e\rightarrow f}$ & $q^2$ & $q(q-1)$\\
\hline
$C_6$ & $\mu_{f\rightarrow g}$ & $0$ & $0$\\
\hline
$C_7$ & $\mu_{g\rightarrow h}$ & $q^2$ & $0$\\
\hline
$C_8$ & $\mu_{h\rightarrow i}$ & $0$ & $0$\\
\hline
$C_9$ & $F_{\text{i}}(x_3,x_4,x_5)$ & $2q^3$ & $0$ \\
\hline
$C_{10}$ & $(x_3^*,x_4^*,x_5^*)$ & $0$ & $q^3-1$ \\
\hline
$C_{11}$ & $x_1^*$, $x_2^*$ & $0$ & $q-1$\\
\hline
\end{tabular}
%\vspace{-16pt}
\end{table}

When not using traceback, SP decoding is performed by computing messages $\mu _{k\rightarrow j}$, $\mu _{j\rightarrow i}$, $\mu _{a\rightarrow b}$, $\mu _{b\rightarrow c}$, $\mu _{c\rightarrow d}$, $\mu _{d\rightarrow e}$, $\mu _{e\rightarrow f}$, $\mu _{f\rightarrow g}$, $\mu _{g\rightarrow h}$, $\mu _{h\rightarrow i}$ as before and then messages $\mu _{i\rightarrow h}$, $\mu _{h\rightarrow g}$, $\mu _{g\rightarrow f}$, $\mu _{f\rightarrow e}$, $\mu _{e\rightarrow d}$, and $\mu _{d\rightarrow c}$ are computed as follows:
\begin{align*}
& \mu _{i\rightarrow h}=\mu _{h\rightarrow g}(x_3,x_4)= \bigvee _{x_5} \delta (x_4+x_5,y_{36-43}) \mu _{j\rightarrow i}(x_3,x_5), \\
& \mu _{g\rightarrow f}=\mu_{f\rightarrow e}(x_3)=\bigvee _{x_4}\delta (t(x_3)+x_4,y_{34-43})\mu _{h\rightarrow g}(x_3,x_4),
\end{align*}
and
\begin{align*}
& \mu_{e\rightarrow d}=\mu_{d\rightarrow c}(x_1,x_3)=\mu_{f\rightarrow e}(x_3)\delta(x_1+x_3,y_{32-43}).
\end{align*}

At ``i'', $x_3^*$, $x_4^*$, and $x_5^*$ are obtained as given above, and $x_1$ and $x_2$ are obtained at ``e'' and ``c'' respectively by first computing the marginal functions and then their appropriate supports using \eqref{eqn_supt_sub} as follows:\\
At ``e''
\begin{align*}
F_{\text{e}}(x_1,x_3)&=\delta(x_1+x_3,y_{32-43})\mu_{f\rightarrow e}(x_3),\\
x_1^*&=\underset{x_1}{\mathrm{supt}}\;F_{\text{e}}(x_1,x_3),
\end{align*}
and at ``c''
\begin{align*}
F_{\text{c}}(x_1,x_2,x_3)&=\delta(x_1+x_2,y_{31-43})\mu_{d\rightarrow c}(x_1,x_3),\\
x_2^*&=\underset{x_2}{\mathrm{supt}}\;F_{\text{c}}(x_1,x_2,x_3).
\end{align*}
The number of semiring operations required to compute additional messages and marginals are tabulated in Table~\ref{tab_nops_notrcbk}.
\begin{table}[h] 
\caption{Multiple-vertex SP Algorithm}
\label{tab_nops_notrcbk}
\centering
\small
\renewcommand{\arraystretch}{1.4}
\begin{tabular}{|c|c|c|c|}
\hline
 & & No. of $\bigwedge$ & No. of $\bigvee$ \\
\hline
$C_{12}$ & $\mu_{i\rightarrow h}$ & $q^3$ & $q^2(q-1)$\\
\hline
$C_{13}$ & $\mu_{h\rightarrow g}$ & $0$ & $0$\\
\hline
$C_{14}$ & $\mu_{g\rightarrow f}$ & $q^2$ & $q(q-1)$\\
\hline
$C_{15}$ & $\mu_{f\rightarrow e}$ & $0$ & $0$\\
\hline
$C_{16}$ & $\mu_{e\rightarrow d}$ & $q^2$ & $0$\\
\hline
$C_{17}$ & $\mu_{d\rightarrow c}$ & $0$ & $0$\\
\hline
$C_{18}$ & $F_{\text{e}}(x_1)$ & $q^2$ & $0$ \\
\hline
$C_{19}$ & $x_1^*$ & $0$ & $q^2-1$ \\
\hline
$C_{20}$ & $F_{\text{c}}(x_2)$ & $q^3$ & $0$ \\
\hline
$C_{21}$ & $x_2^*$ & $0$ & $q^3-1$ \\
\hline
\end{tabular}
%\vspace{-6pt}
\end{table}

Total number of operations (ANDs and ORs) required with traceback is $2C_1+2C_2+C_3+\ldots+C_{10}+2C_{11}$, which is $5q^3+2q^2+q-3=353$ operations, and that without traceback are $2C_1+2C_2+C_3+\ldots+C_{10}+C_{12}+\ldots+C_{21}$, which is $9q^3+6q^2-2q-3=661$ operations. Thus, running single-vertex SP algorithm followed by traceback step affords computational advantage over the multiple-vertex version.\hfill $\square$
\end{example}

%%%%%%%%%%%%%%%%%%%%%%%%%%%%%%%%%%%%%%%%%%%%%%%%%%%%%%%%%%%%%%%%%%
\section{Complexity of The SP Algorithm}
We will now determine the number of semiring operations required to compute the desired marginal functions in an MPF problem using the SP algorithm and the desired supports in an arg-MPF problem using the arg-SP algorithm with and without traceback in the Boolean semiring.

In this section, by addition and multiplication we mean the Boolean OR and AND operations. By Remark~\ref{rem_supt_or}, $\mathrm{supt}$ is considered same as addition. Let $G=(Z,E)=(V\cup W,E)$ be an acyclic factor graph with variable nodes $V$ and factor node $W$. The local domain of a node $z$ is denoted by $x_{S_z}$, the cardinality of its configuration space $A_{S_z}$ by $q_z$, and its degree by $d_z$. For an egde $e=(a,b)$ between nodes $a$ and $b$, $q_e=q_{a\cap b}=|A_{S_a\cap S_b}|$ and $q_{a\backslash b}=|A_{S_a\backslash S_b}|$. For every node $z\in Z$, define $a_z=1$ if $z\in W$ and $0$ otherwise. 

\subsection{Single-vertex SP and arg-SP Algorithms}
The message passed from a variable node $v$ to a factor node $w$ as given in \eqref{eqn_msg_mod1} is
\begin{align*} 
\mu _{v\rightarrow w}(x_{S_v\cap S_w})=\quad\bigvee _{\mathclap{x_{S_v\backslash S_w}\in A_{S_v\backslash S_w}}}\qquad\qquad\,\bigwedge _{\mathclap{w'\in N(v)\backslash w}}\,\mu _{w'\rightarrow v} (x_{S_{w'}\cap S_v}).
\end{align*}
In the above equation, for each of the $q_{v\backslash w}$ values of $x_{S_v\backslash S_w}$, product of $d_v-1$ messages is required which requires $d_v-2$ multiplications. For each of the $q_{v\cap w}$ values of $x_{S_v\cap S_w}$, $q_{v\backslash w}-1$ additions and $q_{v\backslash w}(d_v-2)$ multiplications are required. Thus, the total number of operations required are 
\begin{center}
$q_{v\cap w}(q_{v\backslash w}-1)=q_v-q_{v\cap w}$ additions and
$q_{v\cap w}\,q_{v\backslash w}(d_v-2)=q_v(d_v-2)$ multiplications.
\end{center}
The messages passed from a factor node $w$ to a variable node $v$ as given in \eqref{eqn_msg_mod2} is
\begin{align*} 
\mu _{w\rightarrow v}(x_{S_w\cap S_v})=\;\;\bigvee _{\mathclap{x_{S_w\backslash S_v}\in A_{S_w\backslash S_v}}}\,h _w(x_{S_w})\;\;\bigwedge _{\mathclap{v'\in N(w)\backslash v}}\,\mu _{v'\rightarrow w} (x_{S_{v'}\cap S_w}).
\end{align*}
This involves product of a local functions with $d_w-1$ messages for each of the $q_{w\backslash v}$ values of $x_{S_w\backslash S_v}$. The total number of operations required for this case is 
\begin{center}
$q_{w\cap v}(q_{w\backslash v}-1)=q_w-q_{w\cap v}$ additions and
$q_{w\cap v}\,q_{w\backslash v}(d_w-1)=q_w(d_v-1)$ multiplications.
\end{center}
The messages are passed by all nodes except the root node. At the root node $r$, the marginal function is the product of $d_r$ messages, requiring $(d_r-1)q_r$ multiplications, if it is a variable node \eqref{eqn_state_mod1} and the product of $d_r$ messages with the local function, requiring $d_rq_r$ multiplications, if it is a factor node \eqref{eqn_state_mod2}. In other words, computation of marginal function at $r$ requires $(d_r+a_r-1)q_r$ multiplications. Thus, the total number of additions and multiplications required in the single-vertex SP algorithm is
\begin{align*}
\sum_{v\in V\backslash r}(q_v-q_{v\cap w})+\sum_{w\in W\backslash r}(q_w-q_{w\cap v})
=\sum_{z\in Z\backslash r}q_z - \sum_{e\in E}q_e,
\end{align*}
and 
\begin{align*}
\sum_{v\in V\backslash r}q_v(d_v-2)+\sum_{w\in W\backslash r}q_w(d_w-1)+(d_r+a_r-1)q_r\\
=\sum_{z\in Z}(d_z-1)q_z - \sum_{v\in V\backslash r}q_v + a_rq_r.
\end{align*}
The grand total of the number of additions and multiplications is 
\begin{align*}
\mathcal{C}_1=\sum_{z\in Z}d_zq_z - \sum_{e\in E}q_e - \sum_{v\in V}q_v.
\end{align*}
In the arg-SP algorithm, support of marginal at $r$ is computed which requires $q_r-1$ additions (by Remark~\ref{rem_supt_ops}) so that the grand total of operations in this case is 
\begin{align*}
\mathcal{C}_2&=\sum_{z\in Z}d_zq_z - \sum_{e\in E}q_e - \sum_{v\in V}q_v +\, (q_r\, -1)\\
&=\mathcal{C}_1\,+\, (q_r\, -1).
\end{align*}

\subsection{Single-vertex arg-SP Algorithm with Traceback}
In this case, first the single-vertex arg-SP algorithm with $r$ as the root is executed on the factor graph. Then the local domain $x_{S_z}$ of a neighbor $z$ of $r$ is partitioned into sets $x_{I}=x_{S_z\backslash S_r}$ and $x_J=x_{S_z\cap S_r}$. The value $x_J^*$ is already known from decoding at $r$, and $x_I^*$ is computed using \eqref{eqn_trcbk_mc} as follows:
\begin{align*} x^*_I=\mathrm{supt}\; \lambda_z(x_I,x_J^*), \end{align*}
where the table of values of the partial marginal $\lambda_z(x_I,x_J)$ was already computed at $z$ while passing the message $\mu_{z \rightarrow r}(x_J)$ to the root $r$. We need to look only at the rows for which $x_J=x_J^*$ and output the value of $x_I$ for which $\lambda_z(x_I,x_J^*)=1$. This requires $q_I-1<q_z-1$ additions, where $x_I\in A_I$ and $q_I=|A_I|$. The total number of multiplications remains the same as in the single-vertex arg-SP algorithm, which is $\sum_{z\in Z}(d_z-1)q_z - \sum_{v\in V\backslash r}q_v + a_rq_r$, but the number of additions is the sum of the number of additions required in single-vertex SP algorithm $(\sum_{z\in Z\backslash r}q_z - \sum_{e\in E}q_e)$ and the number of additions required at each node, which is at most $\sum_{z\in Z}q_z-1$. Thus, the grand total of operations is at most
\begin{align*}
\mathcal{C}_3&=\mathcal{C}_1 + \sum_{z\in Z}(q_z-1)=\mathcal{C}_2 + \sum_{z\in Z\backslash r}(q_z-1)\\
&=\sum\limits_{z\in Z}d_zq_z - \sum\limits_{e\in E}q_e + \sum\limits_{v\in V}q_v \,+\, \sum_{z\in Z}(q_z-1)\\
&=\sum\limits_{z\in Z}d_zq_z - \sum\limits_{e\in E}q_e + \sum\limits_{w\in W}q_w \,-\, |Z|.
\end{align*}

\subsection{All-vertex SP and arg-SP Algorithms}
In the all-vertex SP algorithm, first the messages are passed by all the nodes on the unique path towards the root. When the root has received messages from all its neighbors, messages are passed on each edge in the reverse direction, i.e., away from the root and towards the leaves. When all the leaves have received the messages, marginal functions of each node is computed. We use the method suggested in \cite[Sec.~V]{Aji} to compute messages and marginal function.

Let a node $z$ have degree $d$ and has received messages from all but one of its neighbors $z'$ which is on the unique path from $z$ to the root. For an instance $x_{S_z}'$ of $x_{S_z}$, let $k_2,k_3,\ldots,k_d$ be the values of the known messages, $k_1$ be the value of the message it is yet to receive from $y_1$, and $h_z$ be the value of its local function, assumed to be $1$ if $z\in V$, i.e., $k_i=\mu_{y_i\rightarrow z}(x_{S_z}')$, $y_i\in N(z)$. The messages $\hat{k}_i$ involves the product of $h_z$ with all $k_j$s excluding $k_i$ and summing over suitable variables as in \eqref{eqn_msg_mod1} and \eqref{eqn_msg_mod2}; there are $d$ such messages to be sent, one to each neighbor.
\begin{figure}[h]
\centering
\includegraphics[scale=0.5]{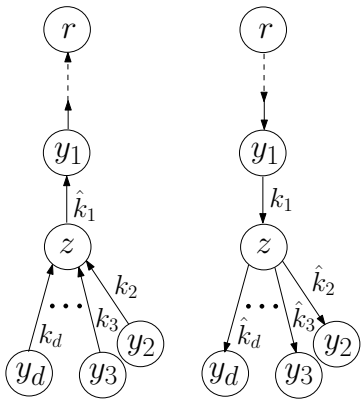}
\caption{Message passing at $z$.}
\label{cplxty}
%\vspace{-14pt}
\end{figure}
This can be achieved by computing the following products consecutively: $c_d=h_zk_d$, $c_{d-1}=c_dk_{d-1}=h_zk_{d-1}k_d$, $\ldots$, $c_3=c_4k_3=h_zk_3k_4\ldots k_d$, $c_2=c_3k_2=h_zk_2k_3\ldots k_d$; this step requires $d-2+a_z$ multiplications. Now $z$ passes $\hat{k}_1=c_2$ to $y_1$ (after summing over suitable variables) and awaits the reception of $k_1$ from $z'$. Once $k_1$ is received, the marginal functions is computed, $g_z=k_1c_2=h_zk_1k_2\ldots k_d$, which requires $1$ multiplication. Then the following products are computed consecutively: $b_1=k_1$, $b_2=b_1k_2=k_1k_2$, $b_3=b_2k_3=k_1k_2k_3$, $\ldots$, $b_{d-1}=b_{d-2}k_{d-1}=k_1k_2\ldots k_{d-1}$; this step requires $d-2$ multiplications. Subsequently, $\hat{k}_i$s are computed as follows: $\hat{k}_2=b_1c_3=h_zk_1k_3k_4\ldots k_d$, $\hat{k}_3=b_2c_4=h_zk_1k_2k_4\ldots k_d$, $\ldots$, $\hat{k}_{d-1}=b_{d-2}c_d=h_zk_1k_2\ldots k_{d-2}k_d$, $\hat{k}_d=h_zb_{d-1}=h_zk_1k_2\ldots k_{d-1}$; this step requires $d-2+a_z$ multiplication. Various messages received and passed by node $z$ are depicted Fig.~\ref{cplxty}. Thus, computation of all the messages to be passed by $z$ and its marginal function requires $(d-2+a_z)+1+(d-2)+(d-2+a_z)=3(d-2)+2a_z+1$ multiplications for each of the $q_z$ values in $A_{S_z}$. This is true for the root node also. Hence, total number of multiplications required is $\sum_{z\in Z} q_z\,[3(d_z-2) \,+\, 2a_z \,+\, 1]$. The number of additions required for computing each message remains the same as in the single-vertex SP algorithm, $q_v-q_{v\cap w}$ for a variable node passing message to $w$. Unlike in the single-vertex case, now $v$ will pass messages to all its $d_v$ neighbors, thus requiring $d_vq_v-\sum_{e \text{ incident on }v}q_e$. Same is true for all the factor nodes also. Hence, the total number of additions required is $\sum_{z\in Z}d_zq_z \,-\, 2\sum_{e\in E}q_e$. The grand total number of operations is then
\begin{align*}
\mathcal{C}_4 &= \sum_{\mathclap{z\in Z}} q_z(3d_z-5 + 2a_z) + \sum_{\mathclap{z\in Z}}d_zq_z - 2\sum_{\mathclap{e\in E}}q_e
\\ &=\sum_{\mathclap{z\in Z}}(4d_z-5)q_z + 2\sum_{w\in W}q_w - 2\sum_{\mathclap{e\in E}}q_E.
\end{align*}

In the arg-SP algorithm, computation of support of marginal function at a node $z$ requires at most $q_z-1$ additions. Thus, the total number of operations required in all-vertex arg-SP algorithm is 
\begin{align*}
\mathcal{C}_5=\mathcal{C}_4\,+\, \sum_{z\in Z}(q_z-1)=\mathcal{C}_4\,+\, \sum_{z\in Z}q_z-|Z|.
\end{align*}

The results of Sections~IV-A,B, and C are tabulated in Table~\ref{tab_nops}. The operation counts presented in this section apply not only to MPF and arg-MPF problem in Boolean semiring, but also to MPF and arg-MPF problem in min-sum, min-product, max-sum, and max-product semiring.

\subsection{Utility and Complexity of SP Algorithm for Decoding Network Code}
The SP algorithm for decoding a network code is advantageous when the code is either nonlinear or it is linear but the number of messages is very large. For linear network codes with manageable value of $\omega$, Gaussian elimination with backward substitution is advisable. 

For a node that demands all the source messages, for example a sink in a multicast network, if application of SP algorithm for decoding network codes leads to computational complexity strictly better than the brute-force decoding complexity, then the code is called a \emph{fast SP decodable network code}. The network code for network $\mathcal{N}_3$ given in Fig.~\ref{fano} is fast SP decodable for the sink with vertex label $43$; decoding complexity is only $\mathcal{O}(q^3)$ compared to the brute-force complexity of $\mathcal{O}(q^5)$.

As stated above, in order to recover the requisite source messages at a sink we need only run the single-vertex arg-SP algorithm followed by traceback steps. For a given sink node, if the factor graph constructed using the method given in Section~III-A is cycle-free and the network code is such that the local domains of all factor nodes have cardinality at most $l\, (<\omega)$, then the number of operations required for decoding using the SP algorithm is $\mathcal{O}(q^l)$. If the sink demands all the source messages, then the brute-force decoding would require $\mathcal{O}(q^{\omega}) (> \mathcal{O}(q^l))$ operations. Thus, an acyclic factor graph with at most $l\, (<\omega)$ variables per equation is a sufficient condition for fast decodability of the network code at a sink which demands all the source messages.

If the graph is not cycle-free then we remove the cycles by variable stretching. Let $m\leqslant \omega$ be the size of maximum cardinality local domain in the new cycle-free factor graph. The number of computations required now will be $\mathcal{O}(q^m)\leqslant \mathcal{O}(q^{\omega})$ and the code is fast decodable iff $m<\omega$.

\section{In-network Function Computation Using The SP Algorithm}
\subsection{Preliminaries}
In a communication network, some nodes may be interested not in the messages generated by some other nodes but in one or more functions of messages generated by other nodes. For example, in a wireless sensor network that comprises several sensor nodes, each measuring environmental parameters like ambient light, temperature, pressure, humidity, wind velocity etc. For long-term record-keeping and weather forecasting, average, minimum, maximum and variance of these meteorological parameters are of interest. Environmental monitoring in an industrial unit is another field of application where relevant parameter may include temperature and level of exhaust gases which may assist in preventing fire and poisoning due to toxic gases respectively. 

We consider in-network function computation in a finite directed acyclic error-free network, $\mathcal{G=(V,E)}$, where codes can perform network coding. For brevity of expression, we use $x$ for $x_{[\omega]}$ in this section. The network model is same as given in Section~I-A for network coding problem with the exception that the sink nodes demand a function of messages rather than a subset of messages, i.e., a sink node $T_k$ demands the function $g_k:F^{\omega}\rightarrow F$. A network code comprises global encoding maps $\tilde{f}_e:F^{\omega}\rightarrow F$, one for each edge $e\in E$, such that there exist $K$ (decoding) maps, $\mathcal{D}_k:F^{|In(T_k)|}\rightarrow F$, for each sink $T_k,\,k\in [K]$, such that $\mathcal{D}_k(y_e:e\in In(T_k))=g_k(x)$. This subsumes the network coding problem of Section~I as a special case. By $(y_e:e\in In(T_k))$ we denote the $|In(T_k)|$-tuple of coded messages received by $T_k$ on its incoming edges.
\begin{remark} \label{rem_func}
Though arguments of a demanded function $g$ may only be a subset, say $\{x_I\}$ for some $I\subseteq [K]$, of messages, we assume it to be a map from $F^{\omega}$ to $F$ for simplicity rather than from $F^{|I|}$ to $F$.
\end{remark}
\begin{remark}
If a sink demands $N\,(>1)$ functions, then such a sink may be replaced by $N$ sinks each demanding one function but the incoming information to these new sinks is the same (see Fig.~\ref{multi_func}). %This transformation is depicted in Fig.~\ref{multi_func}. Dummy nodes are added, one for each incoming edge of the old sink node, that simply copy the message on the incoming edge to the outgoing edges so that each new sink receives the same message.
\begin{figure}[h]
\centering
\includegraphics[scale=0.7]{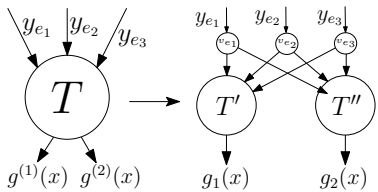}
\caption{Converting a sink that demands multiple functions into multiple sinks each with single demand.}
\label{multi_func}
%\vspace{-14pt}
\end{figure}
\end{remark}
%There are multiple source nodes, each generating a finite number of messages that take value from a $q$-ary finite field $F$ and are denoted using tailless edges terminating at the source node they are generated at. The messages are denoted by $x_1,\ldots,x_{\omega}$ and a message $\omega$-tuple by $x=(x_1,\ldots,x_{\omega})\in F^{\omega}$. There are $K$ sinks, $T_1,\ldots,T_K$, each demanding multiple functions of source messages, i.e., a sink $T_k,\,k\in [K]$ demands $M_k$ functions $g_{k,n}:F^{\omega}\rightarrow F$, for all $m\in [M_k]$. A network code that satisfies all sink demands is specified using a global encoding map for each edge, $\tilde{f}_e:F^{\omega}\rightarrow F$, for all $e\in E$ and a decoding map for each sink $T_k,\,k\in [K]$, $\mathcal{D}_{k}:F^{|In(T_k)|}\rightarrow F^{M_k}$, for all $m\in [M_k]$ such that $\mathcal{D}_k(y_e:e\in In(T_k))=(g_{k,1}(x),g_{k,2}(x),\ldots,g_{k,M_k}(x))$, i.e., given the messages on its incoming edges, $(y_e:e\in In(T_k))$, $T_k$ should be able to compute its desired functions. 

The in-network function computation problem is to design network code that maximizes the frequency of target functions computation, called the \emph{computing capacity}, per network use. In \cite{GiriKr}, bounds on rate of computing symmetric functions (invariant to argument permutations), like minimum, maximum, mean, median and mode, of data collected by sensors in a wireless sensor network at a sink node were presented. The notion of min-cut bound for the network coding problem \cite{Yeung} was extended to function computation problem in a directed acyclic network with multiple sources and one sink in \cite{Appu}. The case of directed acyclic network with multiple sources, multiple sinks and each sink demanding the sum of source messages was studied in \cite{Dey}; such a network is called a sum-network. Relation between linear solvability of multiple-unicast networks and sum-networks was established. Furthermore, insufficiency of scalar and vector linear network codes to achieve computing capacity for sum-networks was shown. Coding schemes for computation of arbitrary functions in directed acyclic network with multiple sources, multiple sinks and each sink demanding a function of source messages were presented in \cite{Dey2}. In \cite{Appu2}, routing capacity, linear coding capacity and nonlinear coding capacity for function computation in a multiple source single sink directed acyclic network were compared and depending upon the demanded functions and alphabet (field or ring), advantage of linear network coding over routing and nonlinear network coding over linear network coding was shown.

In order to obtain the value of its desired functions, a sink node may require to perform some operations on the messages it receives on the incoming edges. Though there are many results on bounds on the computing capacity and coding schemes for in-function computation problem, the decoding operation to be performed at the sink nodes to obtain the value of the desired functions has not been studied. We now formulate computation of the desired functions at sink nodes as an MPF problem over the Boolean semiring and use the SP algorithm on a suitably constructed factor graph for each sink to obtain the value of the desired functions.

\subsection{Function Computation as an MPF Problem}
We consider decoding at the sink node $T_k$. It demands the function $g_k(x_{I_k})$, where $\{x_{I_k}\}=\{x_{i_1},x_{i_2},\ldots,x_{i_{|I_k|}}\}$ is the set of arguments of $g_k$ for some $I_k\subseteq [K]$. For a realization $x^*$ of the message vector, we are interested in the value $G_k^*=g_k(x_{I_k}^*)$. Since a network code only ensures computation of the correct value $G_k^*$ of the demanded target function given the incoming coded message vector $(y_e:e\in In(T_k))$ and not the realization $x_{I_k}^*$ of the messages in the argument set, there may be multiple $|I_k|$-tuples that produce the same values of the incoming coded messages and function value when input to the demanded function, i.e., the network code is a many-to-one mapping. We denote one such message vector by $\widehat{x}_{I_k}$. It need not necessarily be equal to $x_{I_k}^*$ but $\tilde{f}_e(\widehat{x}_{I_k})=\tilde{f}_e(x^*_{I_k})$ for all $e\in In(T_k)$ and $g_k(\widehat{x}_{I_k})=g_k(x_{I_k}^*)$. Using the SP algorithm, we will first obtain $\widehat{x}_{I_k}$ and then evaluate $g(\widehat{x}_{I_k})$ to obtain $G_k^*$. The arg-MPF formulation for obtaining $\widehat{x}_{I_k}$ is given below. Let
\begin{align}
\label{eqn_mpf_func}
S_k= \underset{I_k}{\mathrm{supt}}\;\;\bigwedge _{\mathclap{e \in In(T_k)}} \delta \left( \tilde{f}_e (x)\, , \, y_e \right)=\underset{I_k}{\mathrm{supt}}\;{\beta ^{(k)}(x)} 
\end{align}
Here $\beta ^{(k)}$ is the global product function and $\delta(\tilde{f}_e(x),\,y_e)$ are the local functions of the MPF problem at the sink $T_k$. The set $S_k$ contains the coordinates indexed by $I_k$ of the message vectors $x$ for which $\beta ^{(k)}(x)=1$, i.e., the coordinates indexed by $I_k$ of all those message vectors for which $\tilde{f}_e(x)=y_e$, for all $e\in In(T_k)$. Though $\mathrm{supt}_{I_k}$ may output multiple $|I_k|$-tuples, we will choose any one as $\widehat{x}_{I_k}$. The desired function values is then
\begin{align}
\label{eqn_func_eval}
G_k^*=g_k(\widehat{x}_{I_k})
\end{align}
Thus, the function computation can be performed by using SP algorithm to solve MPF problem in \eqref{eqn_mpf_func} followed by \eqref{eqn_func_eval}.

\begin{theorem}
For all $s\in S_k$ obtained using \eqref{eqn_mpf_func} and each $k\in [K]$, we have $g_k(s) = g_k(x_{I_k}^*)$.
\end{theorem}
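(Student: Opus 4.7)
The plan is to exploit the defining property of a valid network code for function computation: namely, that the decoding map $\mathcal{D}_k$ is a bona fide function of the tuple of incoming coded messages $(y_e : e \in In(T_k))$ and must output $g_k(x)$ whenever $x$ is the true message vector generating that tuple. This property forces $g_k$ to be constant on every fiber of the map $x \mapsto (\tilde{f}_e(x) : e \in In(T_k))$, which is precisely what we need.

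First, I would unpack the definition of $S_k$ given by \eqref{eqn_mpf_func}. An element $s\in S_k$ is, by construction, the projection onto coordinates $I_k$ of some full message vector $\bar{x}\in F^\omega$ for which $\tilde{f}_e(\bar{x}) = y_e$ for every $e\in In(T_k)$, since these are exactly the vectors on which the global product $\beta^{(k)}$ evaluates to $1$. In parallel, the actually transmitted message vector $x^*$ also satisfies $\tilde{f}_e(x^*) = y_e$ for every $e\in In(T_k)$ (because the $y_e$ on the incoming edges of $T_k$ are produced by $x^*$). Hence both $\bar{x}$ and $x^*$ lie in the same fiber of the map $x \mapsto (\tilde{f}_e(x) : e\in In(T_k))$.

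Next, I would invoke the definition of a valid network code for in-network function computation given in Section~V-A: there must exist a (well-defined, single-valued) map $\mathcal{D}_k : F^{|In(T_k)|}\to F$ such that $\mathcal{D}_k(y_e : e\in In(T_k)) = g_k(x)$ for every message vector $x$ consistent with those $y_e$. Applying this to both $\bar{x}$ and $x^*$ yields
\begin{equation*}
g_k(\bar{x}) \;=\; \mathcal{D}_k(y_e : e\in In(T_k)) \;=\; g_k(x^*).
\end{equation*}
Finally, by Remark~\ref{rem_func}, $g_k$ depends only on the coordinates indexed by $I_k$, so $g_k(\bar{x}) = g_k(\bar{x}_{I_k}) = g_k(s)$ and $g_k(x^*) = g_k(x^*_{I_k})$. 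Chaining these equalities gives $g_k(s) = g_k(x_{I_k}^*)$, as required.

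I do not anticipate any serious obstacle here; the proof is essentially a well-definedness argument. The one subtle point to state carefully is the equivalence between ``$s\in S_k$'' and ``there exists $\bar{x}$ extending $s$ with $\tilde{f}_e(\bar{x}) = y_e$ for all $e\in In(T_k)$,'' which follows immediately from the definition of $\underset{I_k}{\mathrm{supt}}$ applied to $\beta^{(k)}$. Everything else reduces to tracing through the hypothesis that the code correctly computes the function at $T_k$.
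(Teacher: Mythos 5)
Your proof is correct and follows essentially the same route as the paper: both arguments reduce to the observation that a valid network code forces $g_k$ to be constant on each fiber of $x\mapsto(\tilde{f}_e(x):e\in In(T_k))$, and then note that every $s\in S_k$ lifts to a vector in the same fiber as $x^*$. The only cosmetic difference is that you extract this constancy directly from the single-valuedness of the decoding map $\mathcal{D}_k$, whereas the paper phrases the same fact via a look-up-table/ambiguity argument.
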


\begin{proof}
By Remark~\ref{rem_func}, $g_k(x_{I_k})=g_k(x)$. A look-up table (LUT) approach to decoding is to maintain a table with $q^{\omega}$ rows and two columns at each sink: first column containing all possible incoming message vectors, $\{(\tilde{f}_e(x):e\in In(T_k)):x\in F^{\omega}\}$, and the second column listing corresponding values of the demanded function, $\{g_k(x):x\in F^{\omega}\}$. Given an instance of incoming messages, a sink node locates the row containing that $|In(T_k)|$-tuple in the first column of the LUT and then outputs the value in the second column of the row, which is the desired function value. If two rows in the LUT have the same entry in the first column (network code is a many-to-one map), the entry in the second column will also be same. On the contrary, if for two $x\neq x'$, $g_k(x)\neq g_k(x')$ but $\tilde{f}_e(x)= \tilde{f}_e(x')$ for all $e\in In(T_k)$ and some $k\in [K]$, then there will be ambiguity at the $k$th receiver because there are two distinct possible function values, $g_k(x)$ and $g_k(x')$, that the decoder may output. 

Thus, a valid network code that fulfills all receivers' demands satisfies $\tilde{f}_e(x)\neq \tilde{f}_e(x')$ for all $e\in In(T_k)$ if $g_k(x)\neq g_k(x')$ for each $k\in [K]$ and $x\neq x'$, $x,x'\in F^{\omega}$. 
%\begin{align*}
%\tilde{f}_e(x)\neq \tilde{f}_e(x')\quad \forall e\in &In(T_k)
%\end{align*}
%if $g_k(x)\neq g_k(x')$ for each $k\in [K]$ and $x\neq x'$, $x,x'\in F^{\omega}$. 

Let $x^*$ be a realization of the message vector and $(y_e:e\in In(T_k))$ the coded message received by $T_k$ on its incoming edges. The set 
\begin{align*}
S'_k&= \mathrm{supt}  {\bigwedge _{e \in In(T_k)} \delta \left( \tilde{f}_e (x)\, , \, y_e \right)}
\end{align*}
contains all the message vectors $s'\in F^{\omega}$ such that $(\tilde{f}_e (s'):e\in In(T_k))=(y_e:e\in In(T_k))$ including $x^*$. Thus, $g_k(s')=g_k(x^*)$ for all $s'\in S'_k$. Since $S_k=\{s'_{I_k}:s'\in S'_k\}$ and $g_k(x_{I_k})=g_k(x)$, we have that $g_k(s)=g_k(x_{I_k}^*)$ for all $s\in S_k$.
\end{proof}

Hence, the SP algorithm for \eqref{eqn_mpf_func} can terminate as soon as a message vector $\widehat{x}_{I_k}$ with $\beta ^{(k)}(\widehat{x}_{I_k})=1$ is found and we need not obtain all possible message vectors which evaluate to the given coded messages on incoming edges of a sink.

\begin{example}
For example, let $\omega=4$, $x_i\in \mathbb{F}_2$ for all $i\in [4]$, and $g(x_1,x_2,x_3)=x_1+x_2+x_3+ Maj(x_1,x_2,x_3)$ needs to be evaluated using $\tilde{f}_{e_1}=x_1+x_2,\,\tilde{f}_{e_2}=x_2+x_3$, and $\tilde{f}_{e_3}=x_1+x_3$. Here $I=\{1,2,3\}$. Let $x^*=1110$ be a realization of the message vector. Then, $y_{e_1}=0$, $y_{e_2}=0$, $y_{e_3}=0$, and $g(x^*)=0$. From \eqref{eqn_mpf_func}, we have
\begin{align*}
S\;=\;\underset{I}{\mathrm{supt}}\bigwedge_{j\in [3]}\delta(\tilde{f}_{e_j}(x),y_{e_j}) \;=\; \{000,111\}
\end{align*}
Any element of $S$ can be chosen as $\widehat{x}_I$ and both evaluate to $0$ when input to $g(x_I)$. This illustrates that $g(\widehat{x}_I)=g(x_I^*)$. \hfill $\square$
\end{example}

The factor graph for computation of function $g_k(x_{I_k})$ at sink $T_k,k\in [K]$ is constructed as follows:
\begin{enumerate}
\item Install $\omega$ variable nodes, one for each source message. These vertices are labeled by their corresponding source messages, $x_i$.
\item Install $|In(T_k)|$ factor nodes and label them $\tilde{f}_e, e\in In(T_k)$. The associated local domain of each such vertex is the set of source messages that participate in that encoding map and the local kernel is $\delta(\tilde{f}_e(x)\, , \, y_e)$.
\item A variable node is connected to a factor node iff the source message corresponding to that variable node participates in the encoding map corresponding to the said factor node.
\item Install an additional dummy factor node with local domain $\{x_{I_k}\}$, local kernel $1$ and label it $g_k$. Connect this node to variable nodes in the set $\{x_{I_k}\}$, i.e., to the arguments of $g_k$. This node corresponds to the demanded function.
\end{enumerate}

As before, first the cycles in the factor graph are removed, if there are any. The single-vertex SP algorithm is run on the acyclic factor graph with the dummy factor node as the root using \eqref{eqn_msg_mod1} and \eqref{eqn_msg_mod2}. Once it has received all the messages, its marginal function (using \eqref{eqn_state_mod2}) and subsequently the set $S_k$ are computed as follows:
\begin{align*}
S_k=\mathrm{supt}\bigwedge _{v\in N(g_k)}\mu _{v\rightarrow g_k}(x_{S_v\cap I_k}),
\end{align*}
where $S_v$ is the local domain of a neighboring variable node $v$ of $g_k$. Theorem~1 states that obtaining only an element $\widehat{x}_{I_k}$ of the set $S_k$ is sufficient to get the desired function value $G_k^*=g_k(\widehat{x}_{I_k})$.

\section{Discussion}
In this paper, we proposed to use the SP algorithm for decoding network codes and performing in-network function computation. We posed the problem of network code decoding at each sink node in a network as an MPF problem over the Boolean semiring. A method for constructing a factor graph for a given sink node using the global encoding maps (or vectors in case of an LNC) of the incoming edges and demands of the sink was provided. The graph so constructed had fewer nodes and led to fewer message being passed lowering the number of operations as compared to the scheme of \cite{Salmond}. We discussed the advantages of traceback over multiple-vertex SP algorithm. The number of semiring operations required to perform the SP algorithm with and without traceback were derived. For the sinks demanding all the source messages, we introduced the concept of fast decodable network codes and provided a sufficient condition for a network code to be fast decodable. Then we posed the problem of function computation at sink nodes in an in-network function computation problem as an MPF problem and provided a method to construct a factor graph for each sink node on which SP algorithm can be run to solve the MPF problem.

Using the SP algorithm to decode network error correcting codes is a possible direction of future work.
%%%%%%%%%%%%%%%%%%%%%%%%%%%%%%%%%%%%%%%%%%%%%%%%%%%%%%%%
\vspace{-3pt}

%%%%%%%%%%%%%%%%%%%%%%%%%%%%%%%%%%%%%%%%%%%%%%%%%%%%%%%%%%%%%%%%%%%

\end{document}